\newcommand{\x}{\otimes}
\newcommand{\Ceg}{\mathcal{C}}
\newcommand{\Ueg}{\mathcal{U}}
\newcommand{\Seg}{\mathcal{S}}
\newcommand{\Reg}{\mathsf{R}}
\newcommand{\Step}[1]{Step.~\hyperref[step:#1]{\ref*{step:#1}}}
\newcommand{\Heg}{\mathrm{HT}}
\newcommand{\spn}{\mathrm{span}}
\newcommand{\ket}[1]{|#1\rangle}
\newcommand{\bra}[1]{\langle#1|}
\newcommand{\braket}[2]{\langle#1|#2\rangle}
\DeclarePairedDelimiter{\ceil}{\lceil}{\rceil}
\DeclarePairedDelimiter{\abs}{\lvert}{\rvert}
\newtheorem{definition}{Definition}
\theoremstyle{plain}
\newtheorem{remark}[definition]{Remark}
\newtheorem{theorem}{Theorem}
\newtheorem{lemma}{Lemma}
\title{Improvement of quantum walk-based search algorithms in single marked vertex graphs}
\author[a,b]{Xinying Li}
\author[a,c]{Yun Shang \thanks{shangyun@amss.ac.cn}}
\affil[a]{Institute of Mathematics, Academy of Mathematics and Systems Science, Chinese Academy of Sciences, Beijing 100190, China}
\affil[b]{School of Mathematical Sciences, University of Chinese Academy of Sciences, Beijing 100049, China}
\affil[c]{NCMIS, MDIS, Academy of Mathematics and Systems Science, Chinese Academy of Sciences, Beijing 100190, China}
\date{}
\begin{document}
	\maketitle
	
	\abstract{Quantum walks are powerful tools for building quantum search algorithms or quantum sampling algorithms named the construction of quantum stationary state.
		However, the success probability of those algorithms are all far away from 1. Amplitude amplification is usually used to amplify success probability, but the souffl\'e problems follow. Only stop at the right step can we achieve a maximum success probability. Otherwise, as the number of steps increases, the success probability may decrease,  which will cause troubles in practical application of the algorithm when the optimal number of steps is not known. 
		
		In this work, we define generalized interpolated quantum walks, which can both improve the success probability of search algorithms and avoid the souffl\'e problems.  
		
	Then we combine generalized interpolation quantum walks with quantum fast-forwarding. The combination both reduce the times of calling walk operator of searching algorithm from $\Theta((\varepsilon^{-1})\sqrt{\Heg})$ to $\Theta(\log(\varepsilon^{-1})\sqrt{\Heg})$ and reduces 
the number of ancilla qubits required from $\Theta(\log(\varepsilon^{-1})+\log\sqrt{\Heg})$ to $\Theta(\log\log(\varepsilon^{-1})+\log\sqrt{\Heg})$,
	and the souffle problem is avoided while the success probability is improved, where $\varepsilon$ denotes the precision and $\Heg$ denotes the classical hitting time.

	Besides, we show that our generalized interpolated quantum walks can be used to improve the construction of quantum states corresponding to stationary distributions as well.
		
		Finally, we give an application that can be used to construct a slowly evolving Markov chain sequence by applying generalized interpolated quantum walks, which is the necessary premise in adiabatic stationary state preparation.}

%	\keywords{quantum walks, search algorithms, Markov chains}

	\maketitle

	%\tableofcontents%?????

	%%%%%%%%%%%%%%%%%%%%%%%%%%%%%%%%%%%%%%%%%%%%%%%%%%%%%%%
	%%% The main text. ???????
	%???????????????????\cref{fig1}
	%\twocolumn\onecolumn
	%%%%%%%%%%%%%%%%%%%%%%%%%%%%%%%%%%%%%%%%%%%%%%%%%%%%%%%

	\section{Introduction}\label{sect:intro}
	Algorithms based on quantum walks have been studied widely and  speed-up classical random walk algorithms on many problems, such as element distinctness \cite{Ambainis2004QuantumWA}, triangle finding \cite{Magniez2005QuantumAF}, matrix product verification \cite{Buhrman2006QuantumVO}, especially on searching problems and sampling problems.
	
	In classical random walks, hitting time describes the speed to reach marked vertex set and mixing time denotes the convergence speed of Markov chain to approximate stationary distribution. They reflect the algorithm complexity of classical search problems and sampling
	
	problems and have many applications \cite{Sheng2019LowMeanHT}, such as mixing time can be used for Google's PageRank algorithm of ranking websites \cite{Page1999ThePC}.
	
	Since quantum walks have shown their advantages compared with random walks, the studies in quantum version of hitting time and mixing time is the current research focus.
	By quantizing classical discrete Markov chains, a bipartite quantum walk model was proposed by Szegedy \cite{sze}, which proved that quantum algorithms quadraticly speed up classical hitting time in detecting the existence of marked vertices for symmetric Markov chain. After Szegedy's framework, there appeared many new algorithms with better complexity performance in many variety contexts \cite{Szegedy2004SpectraOQ, Wocjan2008SpeedupVQ, Yung2012AQM}.
	
	By introducing recursive amplitude amplification, Magniez et al. extended Szegedy's framework to find marked vertex in reversible ergodic Markov chains \cite{Magniez2007SearchVQ}. However, its complexity can not  quadratic speed-up the classical algorithms such as for 2D grid search problem. Then Tulsi proposed a new technique to solve the open problem \cite{Tulsi2008FasterQA}. By extending Tulsi's technique, Magniez et al. \cite{MNRS12} demonstrated the possibility of finding a unique marked vertex with a square root speed up over the classical hitting time for any reversible state-transitive single-marked vertex Markov chain.  In \cite{Krovi2015QuantumWC},  
	Krovi et al. introduced the interpolated quantum walks, the serious restrictions are relaxed to reversible Markov chain with single-marked vertex. Recently, Ambainis et al. \cite{Ambainis2020QuadraticSF} introduced a search algorithm for any reversible Markov chain with quadratic speedup, which is more suitable for multi-marked vertices graphs, but does not perform as well as \cite{Krovi2015QuantumWC} on single-marked vertex graphs.

	Simultaneously, reverse search algorithms and stationary states can be prepared in $\Theta(\sqrt{\Heg}\frac{1}{\varepsilon})$ walk steps for any reversible Markov chain \cite{Krovi2015QuantumWC}. Further, the number of walk steps and ancilla qubits required is improved in \cite{our} by constructing a new reflection on stationary state and introducing quantum fast-forwarding.

	However, all above search algorithms and qsampling algorithms can only achieve constant success probability or even less, and in order to improve success probability, amplitude amplification is usually introduced. Unfortunately, the implementation of amplitude amplification often introduces the souffl\'e problems \cite{souffle}, because the largest success probability can be achieved only when we stop at the right time. Stopping too early or too late will not maximize the success probability. Since the success probability has only a lower bound instead of an exact value, which means we do not know the proper time to stop. When we stop late, the success probability will decrease as the number of steps increases. 
	Besides, now all search algorithms with constant success probability are linearly related to $\varepsilon^{-1}$, where $\varepsilon$ denotes the error of algorithm. Obviously, the dependency on error is too large. For example, if we want to improve the success probability from $1/16$ to $1/4$ with error $1/400$, the cost of current algorithm will become 100 times of the original algorithm cost \cite{Krovi2015QuantumWC}. When it comes to the qsampling algorithm, the error required is even smaller, and the number of calling walk operator will follow with a sharp increase.
	
	To increase success probability, in this work, we introduce generalized interpolated quantum walks, which will be combined with phase estimation to construct a new search algorithm instead of amplitude amplification. The introduction of generalized interpolated quantum walks both amplifies the success probability and avoids the souffl\'e problems while maintaining the time complexity. 
	
	Based on reversible Markov chains, quantum fast-forwarding algorithm \cite{qff} can simulate actions of $n$ random walk steps by $\sqrt{n}$ quantum walks steps.
	Here we find generalized interpolated quantum walks can be combined with quantum fast-forwarding. The combination reduces the dependency on error, which not only reduces the time complexity but also reduces the number of ancilla qubits required. 
	
	Similar to the above search algorithm, for qsampling problems, our generalized interpolated quantum walks also improve the success probability and avoid the souffl\'e problems as well. 
	
	Finally, we give an application of generalized interpolated quantum walks. 
	In adiabatic quantum computing, a quantum state can be prepared by preparing the ground state of a slowly evolving Hamiltonian sequence, and the ground state of Hamiltonian can be seen as the stationary state of corresponding Markov chain. However, it is not easy to prepare the needed Hamiltonian or Markov chain in adiabatic quantum computing.
	By applying generalized interpolated walks, we can construct a series of slowly evolving  Markov chains and prepare quantum stationary state in adiabatic quantum computation.
	
	The paper is organized as follows. 
	Preliminaries is provided in \Cref{sect:preliminaries} firstly. 
	Second we define generalized interpolated walks in \Cref{sect:generalized}. 
	Then to amplify the success probability of search algorithms and qsampling algorithms, we  apply generalized interpolated walks with quantum phase estimation and quantum fast-forwarding instead of amplitude amplification in \Cref{sect:QEE} and \Cref{sect:Qsampling}. Besides, we apply generalized interpolated walks to prepare quantum stationary state of Markov chains in adiabatic quantum computing in \Cref{sect:Application}. Finally, the paper is concluded in \Cref{sect:Discussion}. 
	
	\section{Methods}\label{sect:methods}
	\subsection{Preliminaries}\label{sect:preliminaries}		
	We introduce random walks firstly. Consider a graph $G(V, E)$ consisting of $N = |V|$ vertices, the transition probability between vertices depends on Markov chain $P$. For arbitrary vertex $x,y \in V$, $p_{xy}$ denotes the transition probability from $x$ to $y$. $P=(p_{xy})_{x,y \in V}$ is the corresponding transition matrix of the Markov chain, and the corresponding discriminant matrix is
	$$D(P) \coloneqq \sqrt{P \circ P^T},$$
	where '$\circ$' and square root are computed element-wise. 
	
	For any Markov chain P, if after enough steps every vertex in it can reach any other vertex, and the length of each directed cycle has the greatest common factor of 1, then P is ergodic. Ergodic Markov chain has a unique $1$-eigenvector, $i.e.$ stationary distribution, which is represented by $\pi ={(\pi_x)}_{x\in V}$. A chain is reversible if it satisfies $\pi_xp_{xy} = \pi_yp_{yx}$, and naturally, there is $D(P)\pi = \pi$.
	
	Assume P is a reversible Markov chain. 
	By introducing lazy walks \cite{Krovi2015QuantumWC}, all eigenvalues of $P$ lie in $[0,1]$,  where $\lambda_0$ denotes the eigenvalue of $P$ that equals to $1$ and the other  eigenvalues $\lambda_j$ for $j = 1, \dotsc, n-1$ are strictly less than $1$ in non-increasing order.
	The eigenvalue gap of $P$ is $\Delta \coloneqq 1 - \lambda_1$.

	For any ergodic reversible Markov chain P  on graph $G(V, E)$, the corresponding quantum walks perform on the extended space $\Lambda_V\x\Lambda_V$, where $\Lambda_V\coloneqq\spn\{\ket{x},x\in V\}$. For arbitrary initial probability distribution $\sigma$ over $V$, the initial state is defined as $\ket{\sigma}\ket{\bar{0}}$,  where ``$\ket{\bar{0}}$'' in the second register is some fixed initialization state in $\Lambda_V$.	The corresponding quantum walk operator of reversible Markov chain P  can be expressed as $$W(P) \coloneqq V(P)^{\dagger} \cdot S \cdot V(P) \cdot R_{\bar{0}},$$ where $V(P)$ is derived from the transition matrix $P$ as
	\begin{equation}
		\nonumber
		V(P) \ket{x} \ket{\bar{0}}\coloneqq \ket{x} \sum_{y \in X} \sqrt{P_{xy}} \ket{y}.
	\end{equation}
	The unitary swap operator $S$ and reflection operator $R_{\bar{0}}$ are expressed respectively as
	$$	\ket{x}\ket{y}\mapsto S\ket{x}\ket{y} \coloneqq
	\begin{cases}
		\ket{y}\ket{x}, & \text{if } (x,y) \in E, \\
		\ket{x}\ket{y}, & \text{otherwise.}
	\end{cases}$$
	$$	\ket{x}\ket{y}\mapsto R_{\bar{0}}\ket{x}\ket{y} \coloneqq I\x \left( 2\ket{\bar{0}}\bra{\bar{0}} - I\right)\ket{x}\ket{y}.$$
	
	For a graph $G$ with a marked subset $M \subseteq V$, the expectation time to hit any marked vertex in $M$ by a random walk with Markov chain P is called classical hitting time $\Heg(P, M)$, which is defined as 
	\begin{equation}\label{eq:ht}	\Heg(P, M) \coloneqq \sum_{k=1}^{n-|M|}\frac{|\langle v_k'| \bar{\pi} \rangle|^2 }{1-\lambda_k' },	\end{equation}
	where $\lambda_k'$ and $\ket{v_k'}$ are eigenvalues and the corresponding eigenvectors of the discriminant matrix $D(P')$ defined by the Markov chain $P'$ in nondecreasing order, and $P'$ is the absorbing  version of $P$ that all outgoing transformations from marked vertices are replaced by self-loops. Let $\Heg\coloneqq\max_{x\in V}\Heg(P,\{x\})$ denote the maximum hitting time to reach any vertex in $V$.
	
	In order to search a vertex in $M$, Krovi et al. introduced interpolation into Markov chain firstly \cite{Krovi2015QuantumWC, Krovi2010AdiabaticCA}. The interpolated Markov chain $P(s)$ can be expressed as
	{\tiny }		$$P(s) \coloneqq (1-s)P + sP',$$
	where $0\le s \le 1 $, and the corresponding $W(s)$ and $D(s)$ are defined as $W(P(s))$ and $D(P(s))$.
	
	From the construction of $W(s)$, there may be more than one $1$-eigenvectors for $W(s)$. Here we choose one of them as $\ket{v_0(s)}$ to prove the following theorem. Define $\ket{v_0(s)}$ as
	\begin{equation}\label{eq:v0}
		\ket{v_0(s)} =  \sum_{x}\sqrt{\frac{(1-s)\pi_x+s\pi'_x}{1-s+s\pi_M}}\ket{x},
	\end{equation}
	where $\pi'$ is a classical vector that all zeros in the unmarked vertices, and the same as $\pi$ in the marked vertices. It is easy to get 
	$$W(s)\ket{v_0(s)}\ket{\bar{0}}= \ket{v_0(s)}\ket{\bar{0}}, \ D(s)\ket{v_0(s)}= \ket{v_0(s)}, \quad  \forall s \in \left[ 0, 1\right].$$

	%	\subsection{Search problem}\label{sect:search}	\hspace{1.2em}
	Similar to the classical random walk algorithm, assume that quantum walks have access to the following (controlled) unitary operators:
	
	$Setup(\pi)$: generates $\ket{\pi}\ket{\bar{0}}$, where$ \ket{\pi}=\sum_x\sqrt{x}$. Complexity $\Seg$.
	
	$Update(P)$: implements a (controlled) walk operator $W(P)$. Complexity $\Ueg$.
	
	$Check(M)$: checks whether $x$ is a marked vertex. Complexity $\Ceg$. Described by the mapping
	$$\forall x\in V, b\in{0,1}:\ket{x}\ket{b}\to \left\{\begin{matrix}
		\ket{x}\ket{b} \quad \quad \ & if \  x\notin M\\
		\ket{x}\ket{b\oplus 1} & if\  x \in M.
	\end{matrix}\right.$$
	We list the search algorithm and related details as below.
	The search algorithm proposed by Krovi et al. \cite{Krovi2015QuantumWC} consists of interpolated walks and quantum phase estimation and can be expressed as follows:
	
	\ \ \ \ 1. Use $Setup(\pi)$ to prepare $\ket{\pi}\ket{\bar{0}}$.
	
	\ \ \ \ 2. Apply $Check(M)$ to check whether the current vertex belongs to $M$.
	
	\ \ \ \ \ \ \ \ a. If it is marked, measure and output the current state.
	
	\ \ \ \ \ \ \ \ b. Otherwise, apply $U_{qee}(W(s), \ceil{\log\sqrt{\Heg}})$.\\
	$U_{qee}(W(s), \ceil{\log\sqrt{\Heg}})$ is quantum phase estimation operator and  expressed in the following lemma.
	\begin{lemma}
	[Phase estimation~\cite{Cleve1998QuantumAR}]\label{lem:phase estimation}
		Let $A$ be an arbitrary unitary operator on $n$ qubits and $\ket{\Psi_k}$ is any eigenvector of $A$ with eigenvalue $e^{i\phi_k}$, where $0 \le \phi_k \le \frac{\pi}{2}$. It does not need to know the exact value of $A$ or $\ket{\Psi_k}$ or $e^{i\phi_k}$. For any precision $t \in \mathbb{N}$, there exists a phase estimation algorithm $U_{qee}(A, t)$, which performs controlled-$A$,  controlled-$A^2$, ..., controlled-$A^{2^{t-1}}$. 
		For $k = 0, \cdots, n-1$,  the phase estimation algorithm $U_{qee}(A, t)$ acts on the eigenvector $\ket{\Psi_k}$ as
		\begin{equation*}
			\ket{\Psi_k} \ket{0^t} \mapsto U_{qee}(A, t)\ket{\Psi_k} \ket{0^t} = \ket{\Psi_k} \frac{1}{2^t} \sum_{l, m = 0}^{2^t - 1} e^{-\frac{2\pi i l m}{2^t}} e^{i\phi_k l} \ket{m} =: \ket{\Psi_k} \ket{\xi_k},
		\end{equation*}
		where $\ket{\Psi_k}\ket{\xi_k}$ satisfies that $\langle{0^t}|{\xi_k}\rangle
		= \frac{1}{2^t} \sum_{l=0}^{2^t-1} e^{i \phi_k l}$. The algorithm calls $\mathcal{O}(2^t)$ controlled-$A$ operator and $\mathcal{O}(t)$ ancilla qubits in total.
		\end{lemma}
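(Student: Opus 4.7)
The plan is to follow the standard textbook construction of quantum phase estimation and verify that the stated formula drops out by direct calculation, so the proof is essentially a bookkeeping exercise. First I would apply $H^{\otimes t}$ to the ancilla register to turn $\ket{\Psi_k}\ket{0^t}$ into $\ket{\Psi_k}\otimes 2^{-t/2}\sum_{l=0}^{2^t-1}\ket{l}$. Next I would apply controlled-$A^{2^{j}}$ with the $j$-th ancilla qubit as control, for $j=0,1,\dots,t-1$. Since $\ket{\Psi_k}$ is an eigenvector of $A$ with eigenvalue $e^{i\phi_k}$, controlled-$A^{2^j}$ multiplies the $\ket{l}$ branch by $e^{i\phi_k\, l_j\, 2^{j}}$, where $l_j$ is the $j$-th bit of $l$. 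Multiplying across $j$ collapses to $e^{i\phi_k l}$, so the joint state becomes $\ket{\Psi_k}\otimes 2^{-t/2}\sum_{l=0}^{2^t-1} e^{i\phi_k l}\ket{l}$.

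To match the form in the lemma I would then apply the inverse quantum Fourier transform on the ancilla, using the convention $\ket{l}\mapsto 2^{-t/2}\sum_{m=0}^{2^t-1} e^{-2\pi i l m/2^{t}}\ket{m}$. By linearity the final state is
\begin{equation*}
\ket{\Psi_k}\otimes \frac{1}{2^{t}}\sum_{l,m=0}^{2^{t}-1} e^{-2\pi i l m/2^{t}} e^{i\phi_k l}\ket{m},
\end{equation*}
which is precisely $\ket{\Psi_k}\ket{\xi_k}$. The overlap $\braket{0^t}{\xi_k}=\frac{1}{2^{t}}\sum_{l=0}^{2^{t}-1} e^{i\phi_k l}$ then follows by reading off the $m=0$ term. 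For the resource bound, the controlled stages use $\sum_{j=0}^{t-1}2^{j}=2^{t}-1=\mathcal{O}(2^{t})$ invocations of controlled-$A$, while the Hadamards and inverse QFT act on the $t$ ancilla qubits only, so $\mathcal{O}(t)$ ancillas suffice.

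The main obstacle is not conceptual but notational: I would want to make sure the sign convention in the inverse QFT matches the $e^{-2\pi i l m/2^{t}}$ written in the statement, and that the control/target ordering in the controlled powers is chosen so that the bit-expansion argument assembles $e^{i\phi_k l}$ from the partial phases $e^{i\phi_k l_j 2^{j}}$. Once these conventions are fixed, every identity in the lemma is a one-line check and nothing else needs to be estimated.
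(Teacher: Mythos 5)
Your construction is correct and is exactly the standard Cleve--Ekert--Macchiavello--Mosca argument that the paper relies on by citation (it states \Cref{lem:phase estimation} without proof): Hadamards, phase kickback via the controlled powers $A^{2^j}$, inverse QFT with the $e^{-2\pi i l m/2^t}$ convention, and the geometric-series overlap $\braket{0^t}{\xi_k}=\frac{1}{2^t}\sum_{l=0}^{2^t-1}e^{i\phi_k l}$ read off from the $m=0$ component. The resource count $\sum_{j=0}^{t-1}2^j=2^t-1=\mathcal{O}(2^t)$ controlled-$A$ calls with $t=\mathcal{O}(t)$ ancillas also matches what the lemma asserts, so nothing is missing.
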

	
	For any reversible Markov chain with single-marked vertex $g$, the output state in above search algorithm has constant overlap with the target state $\ket{g}$, but the maximum success probability is about $1/4-\varepsilon$. Amplitude amplification is usually applied to enlarge the success probability.
	
	%	\subsection{Qsampling problem}\label{sect:qsampling}\hspace{1.2em}	
	Qsampling algorithm \cite{our} consists of quantum interpolated walks and quantum fast-forwarding algorithm and can be expressed as follows:
	
	\ \ \ \ 1. Use $Setup(g)$ to prepare $\ket{g}$ for any vertex $g \in V$.
	
	\ \ \ \ 2. Apply $U_{qfs}(W(s), \log\ceil{\sqrt{\Heg}})$.\\
	$U_{qfs}(W(s), \log\ceil{\sqrt{\Heg}})$ is a unitary operator composed of the quantum fast-forwarding operator and can be expressed as:
	$$ccX_{2,4}X_{4}(U_{qff}^{\dagger}(W(s), \log\ceil{\sqrt{\Heg}}) \x I) ccX_{234,5} ( U_{qff}(W(s), \log\ceil{\sqrt{\Heg}})\x I)X_{4}ccX_{2,4}.$$
	Let $\Lambda_0 \coloneqq\spn\{ \ket{0}, \ket{1}\}$, and the above operator is applied in five registers as $\Lambda_V\x\Lambda_V\x\Lambda_0^{\tau}\x\Lambda_0\x\Lambda_0^{r}$. The symbols above are defined as below: for any operator $A$, $A_{x}$ represents the $x$th register which is applied with $A$; $cA_{x,y}$ represents the controlled-$A$ gate with  control register $x$ and target register $y$, where $A$ is applied when the $x$th qubit is $\ket{1}$; similarly, $ccA_{x,y}$ represents that the controlled gate $A$ is applied when the $x$th qubit is $\ket{0}$. 
	
	The state after the unitary operator $U_{qfs}$ has constant overlap with the target state $\ket{\pi}$, but the maximum possible value is not enough and amplitude amplification is usually applied to enlarge success probability. 
	In the following we propose a concept of generalized quantum interpolated walks, which can improve the success probability of above search algorithm and qsampling algorithm to at least  $4/5$.

	\subsection{Generalized quantum interpolated walks} \label{sect:generalized}
	Here we define generalized interpolated walks as follows.
	
	\begin{definition}[Generalized quantum interpolated walks]
	Let $P$ be a Markov chain with a set of marked vertices $M$, and the corresponding absorbing walk is $P'$. For a series of parameter $\{s_1, s_2,. . .,s_r: 0\le s_1 , \cdots , s_r \le 1\} =: S$, an independent Markov chain $Q$ is defined to express the transition probability $q_{ij}$ from $s_i$ to $s_j$.  $s_{k_1},s_{k_2},\cdots$ is a sequence determined by the Markov chain $Q$, where $s_{k_{i+1}}$ is the next item of $s_{k_i}$ according to  $(q_{ij})_{1\le i,j \le r}$. Generalized interpolated Markov chain is defined as $P(S) \coloneqq \{P(s_{k_1}), P(s_{k_2}), \cdots \}$, where
		$$P(s_{k_i}) =(1-s_{k_i})P + s_{k_i}P'.$$
	\end{definition}
	
	All the previous quantum interpolated walks can be seen as special cases of generalized interpolated walks. We summarize and list them in \Cref{tab:1} and apply a special case in following sections.
	
	\begin{figure}[H]
		\centering 
		\includegraphics[width=0.7\textwidth,trim=10 0 20 0,clip]{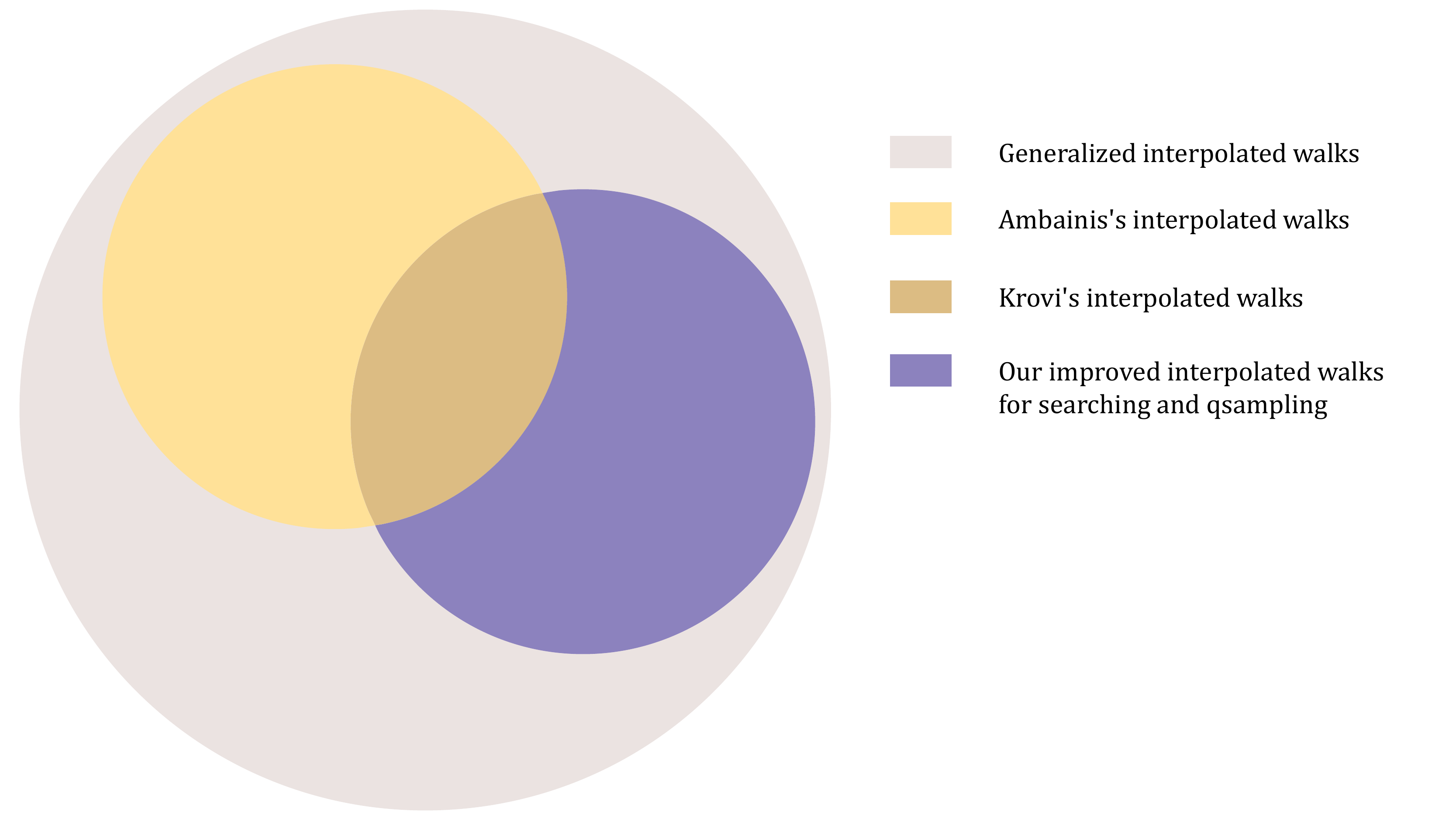}
		\caption{Relationship between all above interpolated walks.}
		\label{fig:0}
	\end{figure}
	\begin{table}[h!]
		\footnotesize
		\caption{The generalized interpolated walks and special cases}
		\label{tab:1}
		\tabcolsep 49pt
		\setlength{\tabcolsep}{7mm}{
			%	\begin{center}
			\begin{tabular*}{\textwidth}{cc}
				\toprule
				Quantum interpolated walks frameworks & $Q$ \\ \hline
				Generalized interpolated walks& $S=\{s_1, s_2,. . .,s_r\}$, $Q = (q_{ij})_{1\le i,j\le r}$  \\ 
				Krovi's interpolated walks \cite{Krovi2015QuantumWC}  & $r = 1$ \\
				Ambainis's interpolated walks \cite{Ambainis2020QuadraticSF} & $q_{i,j}= 1/r, \forall 1\le i,j\le r$ \\ 
				Our improved interpolated walks for searching and qsampling &$q_{i,i+1}= 1, \forall 1\le i\le r-1 $  \\ 
				\bottomrule
		\end{tabular*}}
		%	\end{center}
	\end{table}	
	
	\begin{remark} [The relationship between interpolated walk framework]
		We list all above interpolated walks as follow, and the relationship between them is in \Cref{fig:0}.
		If $r = 1$, generalized quantum interpolated walks degenerate to Krovi's interpolated walks \cite{Krovi2015QuantumWC}, which are determined by only one parameter $s$ chosen by preparing a 1-eigenvector of $W(s)$ that has a large overlap with both the initial state and the target state.

		Ambainis's interpolated walks \cite{Ambainis2020QuadraticSF} can be seen as that each $s_i$ in $S\coloneqq\{s_1, s_2,\cdots,s_r: 0\le s_1, \cdots, s_r \le 1\}$ is chosen with equal probability and $W(s_i)$ acts on the initial state.
		Therefore the walk based on Markov chain Q is in complete graph, and the corresponding distribution keeps $(\frac{1}{r}, \frac{1}{r}, \cdots, \frac{1}{r})$. The series of $s_i$ can be constructed as $\{1-\frac{1}{r}: r \in \{1, 2, 4, \cdots, 2^{\ceil{\log(2592\Heg)}}\}\}$, and the corresponding walk operators are applied to the initial state $\ket{\pi}$ in superimposed form.
		
		Our improved interpolated walks which is applied in following section  fixe the order of each $s_i$ in $S\coloneqq\{s_1, s_2,\cdots,s_r: 0\le s_1, \cdots, s_r \le 1\}$. By implementing the walk operator corresponding to each $s_i$ one by one, as $i$ increases, the overlap between the 1-eigenvector of the current quantum walk operator and the target state will become larger.
		
		As far as we know, quantum interpolated walks that applied previously to search problems or qsampling problems have only one parameter $s$, and the maximum successful probability corresponding to single implementation is a small constant \cite{Krovi2015QuantumWC,Dohotaru2017ControlledQA} or even smaller \cite{Ambainis2020QuadraticSF}.
	\end{remark}
	
	\section{Results} \label{sect:algorithms}
	\subsection{Algorithms based on quantum phase estimation} \label{sect:QEE}
	By applying generalized interpolated walks instead of amplitude amplification, we will amplify the success probability of the search algorithms and qsampling algorithms  while maintaining the time complexity and avoiding the souffl\'e problems.
	\begin{figure}[H]
		\centering 
		\includegraphics[scale=0.3]{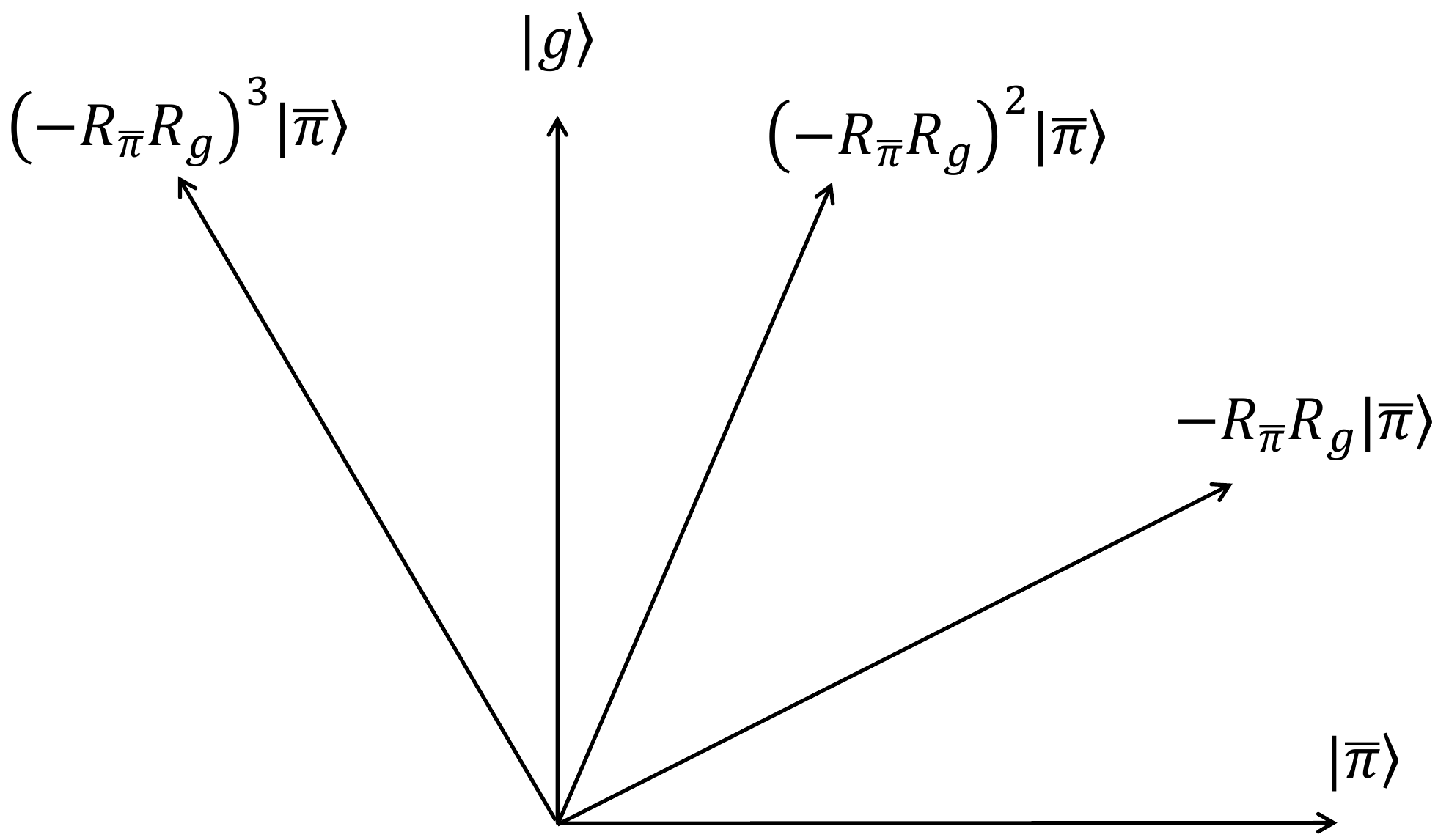}
		\caption{Souffl\'e problem.}
		\label{fig:1}
	\end{figure}
	Since the success probability of existing algorithms is still far from $1$, generally, amplitude amplification is used to enlarge the success probability, and then we often face souffl\'e problem.
	When the number of search steps is greater than the  steps  required, the success probability of these algorithms may drop dramatically as \Cref{fig:1}.
	Here we replace the single parameter $s$ with $S \coloneqq \{s_1, s_2,. . . ,s_n\}$, the set of parameters, and use this to amplify the success probability instead of amplitude amplification.
	\begin{theorem}
	[Search marked vertex based on quantum phase estimation]\label{thm:1}
		For an ergodic reversible Markov chain $P$ in $G(V,E)$ with single marked vertex $g$, \Cref{alg:1} achieves $\ket{g}$ with success probability more than $4/5$ from initial state $\ket{\pi}$. The complexity is $\Seg+\Theta(\varepsilon^{-1}\sqrt{\Heg}) (\Ueg +  \Ceg)$ with $\Theta(\log(\varepsilon^{-1})+\log\sqrt{\Heg})$ ancilla qubits, where $\varepsilon$ is error.
	\end{theorem}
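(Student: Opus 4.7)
The plan is to replace amplitude amplification by a slow interpolation: we choose a schedule $S=\{s_1, s_2, \dotsc, s_r\}$ with $0 < s_1 < \dotsb < s_r < 1$ and, starting from $\ket{\pi}\ket{\bar 0}$, apply the phase–estimation operators $U_{qee}(W(s_i),\ceil{\log\sqrt{\Heg}})$ one after another, reusing the phase register across rounds. The intuition comes from Eq.~\eqref{eq:v0}: $\ket{v_0(s)}\ket{\bar 0}$ is a $1$-eigenvector of $W(s)$ that equals $\ket{\pi}\ket{\bar 0}$ at $s=0$ and tends to $\ket{g}\ket{\bar 0}$ as $s\to 1$. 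If the schedule makes consecutive $\ket{v_0(s_i)}$'s almost parallel, each phase–estimation step behaves as an approximate projector onto the current $1$-eigenspace, so the state is dragged along the curve $s\mapsto\ket{v_0(s)}$ until it lands close to $\ket{g}\ket{\bar 0}$.

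First I would design the schedule. Using the closed form of $|\braket{v_0(s)}{v_0(s')}|^2$, which can be read off directly from Eq.~\eqref{eq:v0} and the single–marked–vertex identity $\pi_M=\pi_g$, I would pick $r=\Theta(\varepsilon^{-1})$ interpolation points clustering near $s=1$ so that $|\braket{v_0(s_i)}{v_0(s_{i+1})}|^2 \geq 1-\mathcal{O}(\varepsilon/r)$ for every $i$ while $|\braket{v_0(s_r)}{g}|^2 \geq 1-\varepsilon$. Next, since $W(s)$ has a phase gap of order $1/\sqrt{\Heg}$ around its $1$-eigenvalue for the interpolated chains used in \cite{Krovi2015QuantumWC}, Lemma~\ref{lem:phase estimation} with precision $t=\ceil{\log\sqrt{\Heg}}$ resolves the $1$-eigenvalue from the rest, so $U_{qee}(W(s_i),t)$ sends any $\alpha\ket{v_0(s_i)}\ket{\bar 0}+\beta\ket{\phi^\perp}$ to $\alpha\ket{v_0(s_i)}\ket{\bar 0}\ket{0^t}$ plus a small tail with nonzero phase register. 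Postselecting on $\ket{0^t}$ between rounds (or flagging it coherently and checking at the end) then realises the chain of approximate projections $\ket{\pi}\to\ket{v_0(s_1)}\to\dotsb\to\ket{v_0(s_r)}$, each step costing only $\mathcal{O}(\varepsilon/r)$ in fidelity. A final $Check(M)$ detects $g$ with probability at least $|\braket{v_0(s_r)}{g}|^2 - \mathcal{O}(\varepsilon) \geq 4/5$.

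The main obstacle is controlling how the orthogonal component of mass roughly $\sqrt{\varepsilon/r}$ leaks back into the ``phase–zero'' subspace of the ancilla, because the imperfect tail of phase estimation mixes a little bad–subspace weight into the flagged ``good'' branch every round. I would bound the per–round leakage using the Fourier tail estimate $|\braket{0^t}{\xi_k}|\leq 1/(2^t\sin(\phi_k/2))$ from Lemma~\ref{lem:phase estimation} together with the gap bound on $W(s_i)$, and then argue that over $r$ rounds the leakage accumulates only additively, pinning down both the constant $4/5$ and the precise size of $r$ in the process. Since, crucially, the success probability is monotone nondecreasing in the number of generalized walk steps applied (each new $s_i$ only drags the good branch closer to $\ket{g}$), running the algorithm longer than necessary does not destroy the output, and the souffl\'e problem disappears.

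The complexity then follows by routine accounting: each round performs $2^t=\Theta(\sqrt{\Heg})$ controlled calls to $W(s_i)$ and one $Check(M)$, so after a single $Setup(\pi)$ the total cost is $\Seg+\Theta(r\sqrt{\Heg})(\Ueg+\Ceg)=\Seg+\Theta(\varepsilon^{-1}\sqrt{\Heg})(\Ueg+\Ceg)$; storing the current index $i\in\{1,\dotsc,r\}$ in $\ceil{\log r}$ qubits and reusing a single $t$-qubit phase register across rounds yields the claimed $\Theta(\log\varepsilon^{-1}+\log\sqrt{\Heg})$ ancilla bound.
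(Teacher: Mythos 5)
Your high-level strategy --- dragging the state along the curve $s\mapsto\ket{v_0(s)}$ by a sequence of interpolated phase estimations that act as approximate projectors onto the successive $1$-eigenspaces --- is exactly the paper's strategy. However, your allocation of parameters contains a genuine gap. You put the entire $\varepsilon^{-1}$ budget into the number of rounds ($r=\Theta(\varepsilon^{-1})$, points clustered near $s=1$) while running each phase estimation at precision $t=\ceil{\log\sqrt{\Heg}}$, and you assert that each round then costs only $\mathcal{O}(\varepsilon/r)$ in fidelity. The available bound does not support this. Writing $\beta_k^i=\braket{v_k(s_i)}{v_0(s_{i-1})}$, the amplitude that leaks back into the $\ket{0^t}$ branch in round $i$ is controlled by $\bigl(\sum_{k\ge1}\abs{\beta_k^i}^2\delta_k^2(s_i)\bigr)^{1/2}\le \frac{\pi}{2^{t}}\sqrt{\Heg(s_i)/2}$, using the Fourier-tail estimate $\delta_k\le \pi/(2^{t}\varphi_k)$ together with the hitting-time identity $\sum_k\abs{\braket{v_k(s_i)}{\bar\pi}}^2/\varphi_k^2\le\Heg/2$; there is no uniform phase gap of order $1/\sqrt{\Heg}$ to invoke, only this weighted sum. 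With $2^{t}=\Theta(\sqrt{\Heg})$ this leakage is $\Theta(1)$ \emph{per round} --- already in round one, where $\ket{v_0(s_0)}=\ket{\bar\pi}$ so $\beta_k^1=\braket{v_k(s_1)}{\bar\pi}$ with no small prefactor --- and over $\Theta(\varepsilon^{-1})$ rounds the accumulated error is unbounded. To make each round's leakage $\varepsilon/r$ you need precision $2^{\tau}=\Theta(r\sqrt{\Heg}/\varepsilon)$, which with your $r=\Theta(\varepsilon^{-1})$ would inflate the total cost to $\Theta(\varepsilon^{-3}\sqrt{\Heg})$ rather than the claimed $\Theta(\varepsilon^{-1}\sqrt{\Heg})$.

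The paper resolves the trade-off the other way around: $r$ is a small \emph{constant} (it suffices that $r\ge 10$), the angles are equally spaced, $\theta_i=\frac{i\pi}{2(r+1)}$, which an explicit optimization shows maximizes the product of consecutive overlaps at $\cos^{r+1}\bigl(\frac{\pi}{2(r+1)}\bigr)>\sqrt{4/5}$, and each phase estimation is run at precision $\tau=\ceil{\log\Gamma_1}$ with $\Gamma_1=\frac{r\pi\sqrt{\Heg}}{\sqrt{2}\,\varepsilon}$, so that each of the $r$ leakage terms is at most $\varepsilon/r$. Thus the $\varepsilon^{-1}$ in both the query count and the ancilla count $\Theta(\log\varepsilon^{-1}+\log\sqrt{\Heg})$ comes from the phase register, not from the number of interpolation points. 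The soufflé-avoidance claim is then made precise by showing $f(r)=\cos^{r}\bigl(\frac{\pi}{2r}\bigr)$ is increasing in $r$, which is the rigorous version of your informal ``each new $s_i$ only drags the good branch closer to $\ket{g}$.'' Your proposal would need either this reallocation of precision versus rounds, or a genuinely sharper per-round leakage bound exploiting $\abs{\beta_k^i}=\frac{\sin(\theta_i-\theta_{i-1})}{\sin\theta_i}\abs{\braket{v_k(s_i)}{\bar\pi}}$, to close the gap; as written the $4/5$ bound is not established.
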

	\begin{algorithm}[h!]
		\footnotesize
		\caption{The generalized interpolated search algorithm based on quantum phase estimation}
		\label{alg:1}
		\textbf{Input}: $\Gamma_1$, a number satisfies that $\Gamma_1 = \frac{r\pi\sqrt{\Heg}}{\sqrt{2}\varepsilon}$, where $\varepsilon$ is the error.\\
		\textbf{Output}: A state that has a constant overlap with the $\ket{g}$.
		\begin{algorithmic}[1]
			\STATE Set $\tau = \lceil  \log \Gamma_1 \rceil$, and prepare the initial state as $\ket{\pi}\ket{\bar{0}}\ket{0^{\tau}}$ on $\Reg_1\Reg_2\Reg_3$.
			\STATE Measure  $\Reg_1$ and check it whether it is $\ket{g}$ by $\Pi_g = \ket{g}\bra{g}$.\label{step:a}
			\IF {the current vertex is marked}
			\RETURN the marked vertex, stop.
			\ELSE
			\FOR {i = 1,2,...,r}
			\STATE Apply $U_{qee}(W(s_i), \tau)$ with $s_i$ to the current state.
			\ENDFOR
			\ENDIF
			\STATE Measure and output the first register.
		\end{algorithmic}
	\end{algorithm}
	
	Intuitively, our quantum algorithm works as \Cref{fig:2}. We fix the set of parameters as $\{s_1, s_2,\cdots,s_r: 0\le s_1\le \cdots \le s_r \le 1\}$. Firstly, we map $\ket{\bar{\pi}}$ to $\ket{v_0(s_1)}$ by quantum walk operator based on $P(s_1)$, and then map $\ket{v_0(s_1)}$ to $\ket{v_0(s_2)}$ by quantum walk operator based on $P(s_2)$, ..., finally, we map $\ket{v_0(s_{r-1})}$ to $\ket{v_0(s_r)}$ by quantum walk operator based on $P(s_r)$ and measure $\ket{v_0(s_r)}$ in the standard basis to get the marked vertex.	
	
	\begin{figure}[H]
		\centering 
		\includegraphics[scale=0.4]{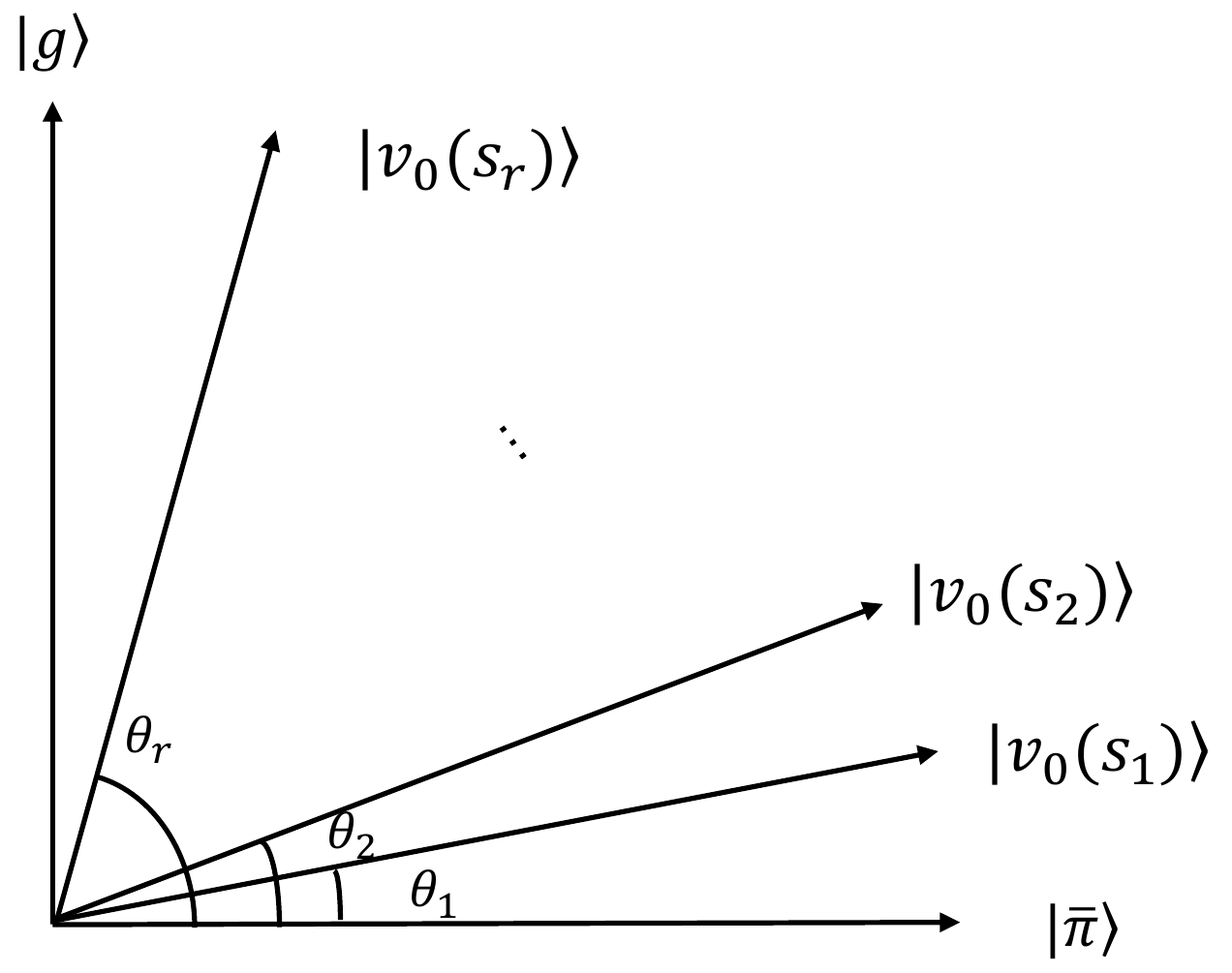}
		\caption{Our quantum algorithm framework.}
		\label{fig:2}
	\end{figure}
	We analyze the success probability firstly. 
	For the initial state $\ket{\pi}$ and target state $\ket{g}$, let $p_{succ}$ denote the success probability, then we have
	\begin{equation}
		\nonumber
		\sqrt{p_{succ}}		
		= \| \ket{g}\bra{g}_{1}U_{qee}(s_r)\cdots U_{qee}(s_2)U_{qee}(s_1)(\ket{\bar{\pi}}\ket{\bar{0}0^{\tau}}) \|.
	\end{equation}
	When $M=\{g\}$, define the state after \Step{a} as $\ket{\bar{\pi}}  \coloneqq \frac{1}{\sqrt{1-\pi_g}}\sum_{x \notin g}\sqrt{\pi_x}\ket{x}.$	From (\ref{eq:v0}), we have 	\begin{equation}	\nonumber
		\ket{v_0(s_i)}=\cos\theta_i\ket{\bar{\pi}}+\sin\theta_i\ket{g},	\end{equation}
	where $\cos\theta_i\coloneqq\sqrt{\frac{1-s_i}{s_i}}$, $\sin\theta_i\coloneqq\sqrt{\frac{2s_i-1}{s_i}}$.

	Set $\ket{v_0(s_0)} \coloneqq \ket{\bar{\pi}}$,	for $i  =1,\cdots,r$, $k  =0,\cdots,n-1$, and
	$\beta_k^i \coloneqq \left \langle  v_k(s_i) |  v_0(s_{i-1}) \right\rangle$. 
	Since for any $s \in [0,1] $ we have $\ket{\bar{\pi}} \in \Lambda_V = \spn\{\ket{v_k(s)}|k=0,\cdots,n-1\}$, then
	\begin{align*}
		\ket{\bar{\pi}}
		= \sum_{k=0}^{n-1} \left \langle  v_k(s_1) |  \bar{\pi} \right\rangle\cdot \ket{v_k(s_1)}
		=\sum_{k=0}^{n-1} \beta_k^1 \ket{v_k(s_1)}.
	\end{align*}
	Here  we give an estimation of the success probability $p_{succ}$ as below.	
	\begin{align}
		\sqrt{p_{succ}}	&=
		\|\ket{g}\bra{g}_{1} U_{qee}(s_r)\cdots U_{qee}(s_1)(\sum_{k=0}^{n-1} \beta_k^1 \ket{v_k(s_1)}) \ket{\bar{0}0^{\tau}}\|\tag*{} \\
		&\ge \|\ket{g} \bra{g}_{1}  \Pi_{0^{\tau}} U_{qee} (s_r) \cdots  \Pi_{0^{\tau}} U_{qee} (s_1)(\sum_{k=0}^{n-1} \beta_k^1 \ket{v_k(s_1)} ) \ket{\bar{0} 0^{\tau}}\|\tag*{where $\Pi_{0^{\tau}}=I\x I \x \ket{0^{\tau}}\bra{0^{\tau}}$ } \\
		&\ge \|\ket{g}\bra{g}_{1} \Pi_{0^{\tau}} U_{qee}(s_r)\cdots \Pi_{0^{\tau}}U_{qee}(s_1)\beta_0^1 \ket{v_0(s_1)}  \ket{\bar{0}0^{\tau}}\|-  \tag*{ } \\
		& \|\ket{g}\bra{g}_{1} \Pi_{0^{\tau}} U_{qee}(s_r) \cdots \Pi_{0^{\tau}} U_{qee}(s_1)(\sum_{k=1}^{n-1} \beta_k^1 \ket{v_k(s_1)}) \ket{\bar{0}0^{\tau}}\| \tag*{ from triangle inequality} \\
		&= \|\ket{g}\bra{g}_{1} \Pi_{0^{\tau}}U_{qee}(s_r)\cdots \beta_0^1 \ket{v_0(s_1)}  \ket{\bar{0}0^{\tau}}\| - A_1 \tag*{} \\
		&\ge \|\ket{g}\bra{g}_{1} \Pi_{0^{\tau}} U_{qee}(s_r) \beta_0^{r-1} \cdots \beta_0^1 \ket{v_0(s_r)}   \ket{\bar{0}0^{\tau}}\| -  A_1 - \cdots -A_r \tag*{} \\
		&= \|\beta_0^{r}\beta_0^{r-1}\cdots\beta_0^1\left \langle g |  v_0(s_r) \right\rangle \| - A_1 -\cdots -A_r,
		\label{eq:total1}
	\end{align}
	where $A_i \coloneqq \|\ket{g}\bra{g}_{1}  \Pi_{0^{\tau}} U_{qee} (s_r) \cdots  \Pi_{0^{\tau}} U_{qee} (s_{i}) ( \sum_{k=1}^{n-1} \beta_k^i\beta_0^{i-1} \cdots \beta_0^1  \ket{v_k(s_i)}) \ket{\bar{0}0^{\tau}} \| $.

	To amplify the success probability, we consider the first item of (\ref{eq:total1}) as follows:
	\begin{align*}
		&|\left \langle g |  v_0(s_r) \right\rangle\beta_0^{r}\beta_0^{r-1}\cdots\beta_0^1|\\
		=&|\cos(\pi/2-\theta_r)\cos(\theta_r-\theta_{r-1})\cdots \cos(\theta_2-\theta_1)\cos\theta_1|\\
		=:&G(r).
	\end{align*}
	Let $\theta_0=0$. %$\theta_{r+1} = \pi/2$, 
	For $i=0,\cdots,r-1$, define $\vartheta_i \coloneqq \theta_{i+1}-\theta_{i}$  and then $$G(r)=\cos(\pi/2-\vartheta_{r-1}-\cdots -\vartheta_{0})\cos\vartheta_{r-1}\cdots \cos\vartheta_1\cos\vartheta_0.$$
	The problem to solve $G(r)$ becomes an optimization problem:
	$$\left\{\begin{matrix}
		X=\{\vartheta_0, \vartheta_1, \cdots, \vartheta_r\} \\
		s.t.\ \sum_i \vartheta_i=\pi/2, \ \vartheta_i\ge0 \quad for\ i \in \{0,1,\cdots,r\}\\
		\max G(X, r)
	\end{matrix}\right.$$
	For any $i \in \{0, 1, \cdots, r-1\}$, there is 
	\begin{align*}
		\frac{\partial G(X, r)}{\partial \vartheta_i}
		=&\frac{\partial [\cos(\pi/2-\vartheta_{r-1}-\cdots -\vartheta_{0})\cos\vartheta_{r-1}\cdots \cos\vartheta_1\cos\vartheta_0]}{\partial \vartheta_i}\\
		=&[ \sin( \pi/2 - \vartheta_{r-1} - \cdots - \vartheta_{0} ) \cos \vartheta_i - \cos( \pi/2 - \vartheta_{r-1} - \cdots - \vartheta_{0} ) \sin \vartheta_{i} ]\cdot \\
		&\cos\vartheta_{r-1}\cdots \cos\vartheta_{i+1}\cos\vartheta_{i-1}\cdots\cos\vartheta_0\\
		=& \sin(\pi/2-\vartheta_{r-1}-\cdots-\vartheta_{i+1}-2\vartheta_{i} -\vartheta_{i-1}-\cdots-\vartheta_{0})\cos\vartheta_{r-1}\cdot\\
		&\cdots\cos\vartheta_{i+1}\cos\vartheta_{i-1}\cdots\cos\vartheta_0.
	\end{align*}
	Define $\vartheta_{r}\coloneqq\pi/2-\sum_{j=0}^{r-1}\vartheta_{j}$. If $\vartheta_{i}<\vartheta_{r}$, $G(X, r)$ increases as $\vartheta_{i}$ increases; if $\vartheta_{i}>\vartheta_{r}$, $G(X, r)$ decreases as $\vartheta_{i}$ increases, which means the maximum of  $G(X, r)$ is achieved when $\vartheta_{i}=\vartheta_{r}$.
	Since $i$ is any number in $\{0, 1, \cdots, r-1\}$, we have $$ G(X, r)=G(\vartheta_0, \vartheta_1, \cdots, \vartheta_{r-1})
	\le G(\vartheta_r, \vartheta_r, \cdots, \vartheta_r).$$
	$$\pi/2=\sum_{j=0}^{r}\vartheta_{j}=(r+1)\vartheta_r,$$  that is $\vartheta_i=\frac{\pi}{2(r+1)}$ for $i  \in \{0, 1, \cdots, r-1\}$.
	Then we have	\begin{equation}	\max_{X}G(X, r)=\cos^{r+1}(\frac{\pi}{2(r+1)}),	\label{eq:theta}
	\end{equation} which is achieved when
	$$	\theta_i= \frac{i\pi}{2(r+1)}, \quad for \quad i \in \{0, 1, \cdots, r\}.	$$
	Now we bound the value of $A_i$. Define  $\alpha_{k,0}^i=\left \langle  v_k(s_{i}) |  \bar{\pi} \right\rangle$, $\alpha_{k,r+1}^i=\left \langle  v_k(s_i) | g\right\rangle$,
	since	\begin{align*}
		0 = \left \langle {v_k(s_i)} | {v_0(s_i)} \right \rangle
		= \cos\theta_i\left \langle {v_k(s_i)} |\bar{\pi} \right \rangle + \sin\theta_i\left \langle {v_k(s_i)} |g \right \rangle 
		=\cos\theta_i \alpha_{k,0}^i + \sin\theta_i \alpha_{k,r+1}^i,
	\end{align*}
	for $i  =1,\cdots,r$, $k  =1,\cdots,n-1$, we have $
	\alpha_{k,r+1}^i = -\frac{\cos\theta_i}{\sin\theta_i}\alpha_{k,0}^i$, then
	\begin{align}
		\beta_k^i=&	\left \langle  v_k(s_{i}) |  v_0(s_{i-1}) \right\rangle \tag*{}\\ 
		=&\cos\theta_{i-1}\left \langle  v_k(s_{i}) | \bar{\pi} \right\rangle +\sin\theta_{i-1}\left \langle  v_k(s_{i}) |  g \right\rangle \tag*{}\\
		=&\cos\theta_{i-1}\alpha_{k,0}^{i} +\sin\theta_{i-1}\alpha_{k,r+1}^{i}  \tag*{}\\
		=&\cos\theta_{i-1}\alpha_{k,0}^{i} -\sin\theta_{i-1}\frac{\cos\theta_i}{\sin\theta_i}\alpha_{k,0}^i \tag*{}\\
		=&[\cos\theta_{i-1} -\sin\theta_{i-1}\frac{\cos\theta_i}{\sin\theta_{i}}]\alpha_{k,0}^i \tag*{}\\
		=&[\cos(\frac{(i-1)\pi}{2(r+1)}) -\sin(\frac{(i-1)\pi}{2(r+1)})\frac{\cos(\frac{i\pi}{2(r+1)})}{\sin(\frac{i\pi}{2(r+1)})}]\alpha_{k,0}^i,
		\label{eq:9}
	\end{align}
	which means \begin{equation}
		\abs{	\beta_k^i}\le\abs{[\cos(\frac{(i-1)\pi}{2(r+1)}) -\sin(\frac{(i-1)\pi}{2(r+1)})\frac{\cos(\frac{i\pi}{2(r+1)})}{\sin(\frac{i\pi}{2(r+1)})}]}\cdot\abs{\alpha_{k,0}^i}\le\abs{\alpha_{k,0}^i}.
		\label{eq:ab}
	\end{equation}
	
	To determine the value of $A_i$, we introduce  the relationship between $W(s)$ and $D(s)$ as follows:
	\begin{lemma}
[Spectrum of $W(s)$ and $D(s)$~\cite{sze}]\label{lem:spectrum}
	The eigenvalues and eigenvectors $\ket{v_k(s)}$ of $D(s)$ satisfy that
		$$(D(s)\x I)\ket{v_k(s)}\ket{\bar{0}} = cos(\varphi_k(s))\ket{v_k(s)}\ket{\bar{0}},$$ for $k = 0, \dotsc, n-1$, with $\varphi_0(s) = 0$, the eigenvalues and eigenvectors of $W(s)$ are:
		$$W(s)\ket{\Psi_0(s)}  = \ket{\Psi_0(s)}, W(s)\ket{\Psi^\pm_k(s)} = e^{\pm i \varphi_k(s)}\ket{\Psi^\pm_k(s)},$$
		and the relation between them can be expressed as:
		$$\ket{\Psi_0(s)} := \ket{v_0(s)}\ket{\bar{0}}, \ket{\Psi^\pm_k(s)} \coloneqq\frac{\ket{v_k(s)}\ket{\bar{0}} \pm i \ket{v_k(s)}\ket{\bar{0}^\perp}}{\sqrt{2}}.$$
		Then $\mathcal{B}_k(s) = span\lbrace \ket{\Psi^+_k(s)}, \ket{\Psi^-_k(s)}\rbrace =  \spn\lbrace\ket{v_k(s)}\ket{\bar{0}}, \ket{v_k(s)}\ket{\bar{0}^\perp}\rbrace$ for $k = 1, \dotsc, n-1$ and $\mathcal{B}_0(s) = span\lbrace\ket{\Psi_0(s)}\rbrace= span\lbrace\ket{v_0(s)}\rbrace$ are invariant subspace of $W(s)$ and mutually orthogonal. Actually, the walk space of $W(s)$ is $\bigcup_{k=0}^{n-1} \mathcal{B}_k(s)$.
	\end{lemma}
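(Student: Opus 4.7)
The plan is to exploit the factorization $W(s) = U R_{\bar 0}$ with $U := V(P(s))^\dagger S V(P(s))$, expressing the walk operator as a product of two involutions, and then reduce the spectral analysis to a family of $2\times 2$ rotations on invariant subspaces indexed by the eigenvectors of $D(s)$. The discriminant $D(s)$, with entries $D(s)_{xy} = \sqrt{P(s)_{xy}\,P(s)_{yx}}$, is manifestly real and symmetric and (by the standard reversibility argument for $P(s)$) has spectrum in $[-1,1]$, so it admits an orthonormal eigenbasis $\{\ket{v_k(s)}\}$ with eigenvalues I would parametrize as $\cos\varphi_k(s)$; the $1$-eigenvector exhibited in \eqref{eq:v0} forces $\varphi_0(s)=0$. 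Tensoring with $\bra{\bar 0}$ on the second register yields the first display of the lemma at once.

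Next I would establish the key matrix-element identity
\begin{equation*}
\bra{x}\bra{\bar 0}\, V(P(s))^\dagger S V(P(s))\, \ket{y}\ket{\bar 0} = \sqrt{P(s)_{xy}\,P(s)_{yx}} = D(s)_{xy}
\end{equation*}
by direct computation: expanding $V(P(s))\ket{y}\ket{\bar 0} = \ket{y}\sum_z\sqrt{P(s)_{yz}}\ket{z}$ together with its adjoint analogue for $\bra{x}\bra{\bar 0}V(P(s))^\dagger$, inserting the swap $S$, and collapsing the double sum via the edge-swap action to the single $x\leftrightarrow y$ contribution. Hence the block of $U$ on $\Lambda_V\otimes\ket{\bar 0}$ acts as $D(s)$ on the first register, and in particular $(I\otimes\bra{\bar 0})\,U\ket{v_k(s)}\ket{\bar 0} = \cos\varphi_k(s)\ket{v_k(s)}$.

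I would then use $U^2 = V(P(s))^\dagger S^2 V(P(s)) = I$ (since $S$ is an involution and $V(P(s))$ is unitary) to pin down the remaining component of $U\ket{v_k(s)}\ket{\bar 0}$. Unitarity combined with the previous identity forces
\begin{equation*}
U\ket{v_k(s)}\ket{\bar 0} = \cos\varphi_k(s)\ket{v_k(s)}\ket{\bar 0} + \sin\varphi_k(s)\ket{v_k(s)}\ket{\bar 0^\perp},
\end{equation*}
where $\ket{v_k(s)}\ket{\bar 0^\perp}$ denotes the unit vector orthogonal to $\Lambda_V\otimes\ket{\bar 0}$ defining the second basis direction of $\mathcal B_k(s)$; applying $U$ once more and using $U^2 = I$ gives $U\ket{v_k(s)}\ket{\bar 0^\perp} = \sin\varphi_k(s)\ket{v_k(s)}\ket{\bar 0} - \cos\varphi_k(s)\ket{v_k(s)}\ket{\bar 0^\perp}$. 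Combining these with $R_{\bar 0}\ket{v_k(s)}\ket{\bar 0} = \ket{v_k(s)}\ket{\bar 0}$ and $R_{\bar 0}\ket{v_k(s)}\ket{\bar 0^\perp} = -\ket{v_k(s)}\ket{\bar 0^\perp}$, the matrix of $W(s) = UR_{\bar 0}$ on $\mathcal B_k(s)$ in the ordered basis $\bigl(\ket{v_k(s)}\ket{\bar 0},\,\ket{v_k(s)}\ket{\bar 0^\perp}\bigr)$ comes out to the rotation $\smx{\cos\varphi_k(s) & -\sin\varphi_k(s) \\ \sin\varphi_k(s) & \cos\varphi_k(s)}$, whose eigenvalues are $e^{\pm i\varphi_k(s)}$ with eigenvectors $(\ket{v_k(s)}\ket{\bar 0} \pm i\ket{v_k(s)}\ket{\bar 0^\perp})/\sqrt{2}$, matching $\ket{\Psi_k^{\pm}(s)}$ after absorbing a sign choice into the definition of $\ket{\bar 0^\perp}$.

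For $k=0$ the relation $\sin\varphi_0(s)=0$ collapses $\mathcal B_0(s)$ to the line $\spn\{\ket{v_0(s)}\ket{\bar 0}\}$, leaving $\ket{\Psi_0(s)}=\ket{v_0(s)}\ket{\bar 0}$ as the single $+1$-eigenvector. Mutual orthogonality of the $\mathcal B_k(s)$ is immediate from the orthonormality of $\{\ket{v_k(s)}\}$ in the first register, and the direct sum $\bigoplus_k \mathcal B_k(s)$ contains every initial state of the form $\ket{\psi}\ket{\bar 0}$ together with its $W(s)$-orbit, exhausting the accessible walk space. The main obstacle is verifying that the orthogonal complement $\ket{v_k(s)}\ket{\bar 0^\perp}$ of $\ket{v_k(s)}\ket{\bar 0}$ inside $\spn\{\ket{v_k(s)}\ket{\bar 0},\, U\ket{v_k(s)}\ket{\bar 0}\}$ genuinely factors with $\ket{v_k(s)}$ in the first register; this does not follow from the abstract product-of-involutions structure alone and requires fixing a suitable extension of $V(P(s))$ to the full space $\Lambda_V\otimes\Lambda_V$ (beyond its defining action on $\Lambda_V\otimes\ket{\bar 0}$) under which the first-register label is preserved by the action of $V(P(s))^\dagger S V(P(s))$, a point best settled by a short direct calculation and is the one place where the abstract involution argument leaves a genuine check to perform.
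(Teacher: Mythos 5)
The paper does not prove this lemma at all --- it is imported verbatim from Szegedy's work \cite{sze} --- so there is no internal proof to compare against; your reconstruction is the standard spectral analysis of a Szegedy-type walk (the product-of-two-involutions / Jordan's-lemma argument), and it is essentially correct. The computation of the matrix element $\bra{x}\bra{\bar 0}V^\dagger S V\ket{y}\ket{\bar 0}=D(s)_{xy}$, the use of $U^2=I$ to close each orbit into a two-dimensional rotation block, and the diagonalization yielding $e^{\pm i\varphi_k(s)}$ with eigenvectors $(\ket{v_k(s)}\ket{\bar 0}\pm i\ket{v_k(s)}\ket{\bar 0^\perp})/\sqrt 2$ are all sound, including the sign-convention remark. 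The caveat you flag at the end is a real one, but it is a defect of the lemma's notation rather than of your argument: in general the residual $U\ket{v_k(s)}\ket{\bar 0}-\cos\varphi_k(s)\ket{v_k(s)}\ket{\bar 0}$ need not factor as $\ket{v_k(s)}$ tensored with a $k$-independent second-register state, and the correct reading of $\ket{v_k(s)}\ket{\bar 0^\perp}$ is simply ``the unit vector of $\mathcal B_k(s)$ orthogonal to $\ket{v_k(s)}\ket{\bar 0}$.'' None of the spectral conclusions depend on the literal tensor factorization. One small improvement: you invoke the factorization to get mutual orthogonality of the $\mathcal B_k(s)$, but this is available without it, since $\bra{v_j(s)}\bra{\bar 0}\,U\,\ket{v_k(s)}\ket{\bar 0}=\bra{v_j(s)}D(s)\ket{v_k(s)}=\cos\varphi_k(s)\,\delta_{jk}$ and $\bra{v_j(s)}\bra{\bar 0}U^\dagger U\ket{v_k(s)}\ket{\bar 0}=\delta_{jk}$ together force $\spn\{\ket{v_j(s)}\ket{\bar 0},U\ket{v_j(s)}\ket{\bar 0}\}\perp\spn\{\ket{v_k(s)}\ket{\bar 0},U\ket{v_k(s)}\ket{\bar 0}\}$ for $j\neq k$; with that substitution the ``genuine check'' you defer is no longer needed anywhere in the proof.
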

	
	From \Cref{lem:phase estimation} and \Cref{lem:spectrum}, since $W(s_i)$ is a real operator, $U_{qee}(W(s_i),\tau)$ acts on the eigenvectors of $W(s_i)$ as:
	$$
	\ket{\Psi_0(s_i)}  \mapsto \ket{\Psi_0(s_i)} \ket{0^{\tau}},\quad
	\ket{\Psi^{\pm}_k(s_i)}  \mapsto \ket{\Psi^{\pm}_k(s_i)} \ket{\xi^{\pm}_k(s_i)},$$
	where $\ket{\xi^{\pm}_k(s_i)}$ is a ${\tau}$-qubit state that satisfies
	$		\left \langle 0^{\tau}  | \xi^{\pm}_k(s_i)  \right \rangle
	= \frac{1}{2^{\tau}} \sum_{l=0}^{2^{\tau}-1} e^{\pm i \varphi_k(s_i) l}
	=: \delta^{\pm}_k(s_i).$
	Thus, eigenvectors of $D(s_i)$ after phase estimation are
	\begin{align*}
		\nonumber
		\ket{v_k(s_i)}\ket{\bar{0}0^{\tau}} &\mapsto U_{qee}(s_i)\ket{v_k(s_i)}\ket{\bar{0}0^{\tau}}\\
		&=\frac{1}{\sqrt{2}} \sum_{k=1}^{n-1} 
		\bigl( \ket{\Psi^+_k(s_i)} \ket{\xi^+_k(s_i)} + \ket{\Psi^-_k(s_i)} \ket{\xi^-_k(s_i)} \bigr).
	\end{align*}
	Now we bound $A_i$ as
	\begin{align} A_i
		=&	 \|\ket{g} \bra{g}_{1} \Pi_{0^{\tau}} U_{qee}(s_r) \cdots  \Pi_{0^{\tau}} U_{qee}(s_{i})( \sum_{k=1}^{n-1} \beta_k^i\beta_0^{i-1} \cdots \beta_0^1\ket{v_k(s_i)} ) \ket{\bar{0}0^{\tau}}\|\tag*{}\\
		\le&	 \beta_0^{i-1} \cdots \beta_0^1\|\Pi_{0^{\tau}}U_{qee}(s_{i})(\sum_{k=1}^{n-1} \beta_k^i \ket{v_k(s_i)}) \ket{\bar{0}0^{\tau}}\|\tag*{}\\
		=&	  \beta_0^{i-1} \cdots \beta_0^1\|\Pi_{0^{\tau}} \frac{1}{\sqrt{2}} \sum_{k=1}^{n-1} \beta_k^i
		\bigl( \ket{\Psi^+_k(s_i)} \ket{\xi^+_k(s_i)} + \ket{\Psi^-_k(s_i)} \ket{\xi^-_k(s_i)} \bigr)\|\tag*{}\\
		=&	 \beta_0^{i-1} \cdots \beta_0^1 \| \frac{1}{\sqrt{2}} \sum_{k=1}^{n-1} \beta_k^i
		\bigl( \delta^+_k(s_i)\ket{\Psi^+_k(s_i)} + \delta^-_k(s_i)\ket{\Psi^-_k(s_i)}  \bigr)\|\tag*{}\\
		\le&	 \beta_0^{i-1} \cdots \beta_0^1\sqrt{\sum_{k=1}^{n-1} \abs{\beta_k^i}^2 \delta_k^2(s_i)} \le	 \sqrt{\sum_{k=1}^{n-1} \abs{\beta_k^i}^2 \delta_k^2(s_i)}.\tag*{$\ket{\Psi^\pm_1}, \dotsc, \ket{\Psi^\pm_k}$ are mutually orthogonal}
	\end{align}	
	
	where $\delta_k(s_i) := \abs{\delta^+_k(s_i)} = \abs{\delta^-_k(s_i)}$, ${\tau} = \ceil{\log \Gamma_1}$, and satisfies\footnote{The detailed construction can be seen in \cite{Krovi2015QuantumWC}.} $
	\delta_k^2(s_i) 	\leq \frac{\pi^2}{2^{2{\tau}} \varphi^2_k(s_i)} 	\leq \frac{\pi^2}{\Gamma_1^2 \varphi^2_k(s_i)}
	$.
	
	Similar to  (\ref{eq:ht}),  the interpolated hitting time in single-marked vertex graph is  
	\begin{equation}
		\Heg(s_i) = \sum_{k=1}^{n-1} \frac{\abs{\braket{v_k(s_i)}{\bar{\pi}}}^2}{1 - \lambda_k(s_i)}=   \sum_{k=1}^{n-1} \frac{\abs{\alpha_{k,0}^i}^2}{1 - \cos \varphi_k(s_i)}
		\geq 2 \sum_{k=1}^{n-1} \frac{\abs{\alpha_{k,0}^i}^2}{\varphi^2_k(s_i)}.
		\label{eq:16}
	\end{equation}
	
	By combining (\ref{eq:ab}) and (\ref{eq:16}) we have
	$$	\sqrt{\sum_{k=1}^{n-1} \abs{\beta_k^i}^2 \delta_k^2(s_i)}
	\le\sqrt{\sum_{k=1}^{n-1} \abs{\alpha_{k,0}^i}^2 \delta_k^2(s_i)}
	\le\frac{\pi}{\Gamma_1}\sqrt{\sum_{k=1}^{n-1} \frac{\abs{\alpha_{k,0}^i}^2}{\varphi^2_k(s_i)}}
	\le\frac{\pi}{\Gamma_1} \sqrt{\frac{\Heg(s_i)}{2}}.$$
	
	From theorem 17 in \cite{Krovi2015QuantumWC}, we have $HT(s_i) \le HT$ in the single-marked vertex Markov chain.
	Thus, $\frac{\pi}{\Gamma_1}\sqrt{\frac{\Heg(s_i)}{2}} \le \frac{\varepsilon}{r}$ and (\ref{eq:total1}) becomes
	\begin{equation}
		\nonumber
		\sqrt{p_{succ}}
		\geq \cos^{r+1}(\frac{\pi}{2(r+1)})  - r\cdot\frac{\varepsilon}{r}
		\geq \cos^{r+1}(\frac{\pi}{2(r+1)})  - \varepsilon,
	\end{equation}
	that is $p_{succ}\ge( \cos^{r+1}(\frac{\pi}{2(r+1)})  - \varepsilon)^2$.
	
	We evaluate how the value of $\cos^{r+1}(\frac{\pi}{2(r+1)})$ varies with $r$ as follows. 
	Let $f(r)=\cos^r(\frac{\pi}{2r})$ and $\ln(f(r))=r\ln(\cos(\frac{\pi}{2r}))$, then 
	\begin{align*}
		\frac{f'(r)}{f(r)}
		=&\ln(\cos(\frac{\pi}{2r}))+\frac{r[-\sin(\frac{\pi}{2r})\frac{\pi}{2}(-\frac{1}{r^2})]}{\cos(\frac{\pi}{2r})}\\
		=&\ln(\cos(\frac{\pi}{2r}))+\frac{\frac{\pi}{2r}\sin(\frac{\pi}{2r})}{\cos(\frac{\pi}{2r})}.
	\end{align*}
	
	Since $r \ge 2$, then $0<\frac{\pi}{2r}<\frac{\pi}{2}$ and $\cos(\frac{\pi}{2r})>0$,
	which means	to determine whether $f'(r)$ is positive or negative, we only need to determine  $$\cos(\frac{\pi}{2r})\ln(\cos(\frac{\pi}{2r}))+\frac{\pi}{2r}\sin(\frac{\pi}{2r})=:g(r).$$
	
	Let $y\coloneqq\frac{\pi}{2r}$ and $h(y)\coloneqq \cos(y)\ln(\cos(y))+y\sin(y)=g(r)$, then
	$$h'(y)\coloneqq -\sin(y)\ln(\cos(y))+\cos(y)\frac{-\sin(y)}{\cos(y)}+\sin(y)+y\cos(y).$$
	Since $r\ge2$, $y\in(0,1]$, $h'(y)\ge0$ and $h(y)$ is an increasing function, which means for any $r\ge2$, $g(r)=h(y)\ge h(0)=0$, that is $f'(r)\ge0$, $f(r)$ is an increasing function. 
	
	From the definition of $f(r)$, the success probability increases and tends to $1$ as the number of steps $r$ increases, which means  our method will never face the souffl\'e problems. When $r\ge 10$ and $\varepsilon$ is small, we have $p_{succ}\ge 4/5$, and the success probability increases as $r$ increases.

	\subsection{Algorithms based on quantum fast-forwarding} \label{sect:Qsampling}
	%	\hspace{1.2em}
	In the previous section, a new search algorithm based on generalized interpolated walks and phase estimation improves the success probability to nearly 1.  However, it relies on $\varepsilon^{-1}$ too heavily that when the value of $\varepsilon$ is small, the algorithm complexity also increases rapidly. Quantum fast-forwarding as a quantum version of random walk with acceleration can be used here to reduce the dependence on $\varepsilon$. In this section, we combine the generalized interpolated walks with quantum fast-forwarding, which both reduces the times of calling walk operators and the number of ancilla qubits.
	
	We introduce quantum fast-forwarding algorithm firstly.
	\begin{lemma}
[Quantum fast-forwarding~\cite{qff}]\label{lem:quantum fast-forwarding}
	For any reversible Markov chain and corresponding walk operator $W(P)$ on state space $\Lambda_V\x \Lambda_V$ and any $\ket{\psi} \in \Lambda_V$, quantum fast-forwarding algorithm $U_{qff}(W, \tau)$ acting on $\ket{\psi}\ket{\bar{0}0^{\tau}}$ will output a state $\varepsilon$-close to $D^{t}\ket{\psi}\ket{00^{\tau}} + \ket{\chi}\ket{{00}^{\perp}}$,
		where $\ket{{00}^{\perp}}$ satisfies $\langle 00^{\tau}|00^{\perp }\rangle =0$. The algorithm invokes controlled-$W$ operator $\Gamma \coloneqq \Theta( \sqrt{t\log(\varepsilon^{-1})}) $ times and requires  $\tau = \lceil \log\Gamma\rceil$ ancilla qubits.
	\end{lemma}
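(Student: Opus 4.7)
The plan is to derive the claim from a Chebyshev polynomial approximation of the function $x\mapsto x^{t}$ on $[-1,1]$, combined with Szegedy's block structure from \Cref{lem:spectrum}. The starting observation is that on each invariant subspace $\mathcal{B}_{k}$, the walk $W(P)$ acts as a rotation by angle $\pm\varphi_{k}$ with $\cos\varphi_{k}$ an eigenvalue of $D(P)$; consequently, for every integer $j\ge 0$,
\begin{equation*}
(I\otimes \proj{\bar{0}})\,W(P)^{j}\,(I\otimes\proj{\bar{0}}) \;=\; T_{j}(D(P))\otimes \proj{\bar{0}},
\end{equation*}
where $T_{j}$ is the Chebyshev polynomial of the first kind. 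Hence any linear combination $\sum_{j=0}^{\Gamma} a_{j} W(P)^{j}$ sandwiched between projectors onto $\ket{\bar{0}}$ realizes $P_{\Gamma}(D(P))=\sum_{j=0}^{\Gamma} a_{j} T_{j}(D(P))$ on the $\ket{\bar{0}}$ sector, and this is the only tool we need for the analysis.

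Next I would pick the coefficients by approximating $x^{t}$ by a truncated Chebyshev series. A clean probabilistic route is to write $x^{t}=\mathbb{E}[T_{|S_{t}|}(x)]$ for a $\pm1$ random walk $S_{t}$ of length $t$, which follows from $T_{n}(\cos\theta)=\cos(n\theta)$ together with $\mathbb{E}[\cos(S_{t}\theta)]=\cos^{t}\theta$. Truncating at $|S_{t}|\le \Gamma$ yields a uniform error bounded by $\Pr[|S_{t}|>\Gamma]\le 2e^{-\Gamma^{2}/(2t)}$ by Hoeffding's inequality. Forcing this $\le \varepsilon$ gives exactly $\Gamma=\Theta(\sqrt{t\log(1/\varepsilon)})$, and the coefficients $a_{j}=\Pr[|S_{t}|=j]$ form a probability distribution, which is precisely what an LCU circuit needs.

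With $P_{\Gamma}$ in hand, I would implement it as a linear combination of unitaries on a $\tau=\lceil\log\Gamma\rceil$-qubit ancilla: prepare $\sum_{j=0}^{\Gamma}\sqrt{a_{j}}\ket{j}$ on the ancilla, apply controlled-$W(P)^{j}$ on the walk register, and uncompute the ancilla preparation. Measuring the ancilla in state $\ket{0^{\tau}}$ projects the walk registers onto $P_{\Gamma}(D(P))\ket{\psi}\ket{\bar{0}}$ by the identity above, while the orthogonal sector contributes only the garbage term $\ket{\chi}\ket{00^{\perp}}$. Combining this with $\norm{x^{t}-P_{\Gamma}(x)}_{\infty}\le \varepsilon$ yields $\varepsilon$-closeness to $D(P)^{t}\ket{\psi}\ket{\bar{0}}\ket{0^{\tau}}+\ket{\chi}\ket{00^{\perp}}$, as claimed.

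The main obstacle is performing the controlled-$W^{j}$ block cost-effectively, since a naive realization would burn $\Theta(\Gamma^{2})$ calls to controlled-$W$. The fix is to execute $\Gamma$ sequential rounds, each consisting of exactly one controlled-$W$ together with a small bookkeeping update on the ancilla counter, so that by the end the good branch carries precisely the operator $P_{\Gamma}(D(P))$ while only $\Gamma=\Theta(\sqrt{t\log(1/\varepsilon)})$ calls to controlled-$W$ have been made. This is essentially the qubitization/quantum signal processing construction of Low--Chuang specialized to Szegedy walks; checking that the per-round increments actually produce the intended Chebyshev coefficients $a_{j}$, rather than some other polynomial, is the most delicate verification in the proof.
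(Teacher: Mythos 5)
Your proposal is correct and follows essentially the same route as the cited reference, matching the construction the paper itself sketches immediately after the lemma ($V_q$ preparing the amplitudes $\sqrt{p_l}$ with $p_l=\Pr[\,|S_t|=l\,]$ and $W_{ctrl}=\sum_l W^l\otimes\ket{l}\bra{l}$): the Chebyshev identity $(I\otimes\proj{\bar{0}})W^j(I\otimes\proj{\bar{0}})=T_j(D)\otimes\proj{\bar{0}}$, the random-walk expansion $x^t=\mathbb{E}[T_{|S_t|}(x)]$ truncated via Hoeffding to get $\Gamma=\Theta(\sqrt{t\log(\varepsilon^{-1})})$, and an LCU on a $\lceil\log\Gamma\rceil$-qubit ancilla. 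The paper does not reprove this imported lemma, and your only detour is the closing appeal to qubitization/QSP, which is unnecessary: the standard phase-estimation-style binary decomposition of $W_{ctrl}$ already realizes the select operator with $\Theta(\Gamma)$ controlled-$W$ calls.
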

	
	The dynamics of discriminant matrix $D$ are simulated by the unitary operator $U_{qff}(W,\tau)$, which acts as follows:
	\begin{equation} \label{eq:close}
		\left \| \Pi_{\bar{0}0}U_{qff}(W,\tau)\ket{\psi}\ket{\bar{0}0^{\tau}} - (D^{t}\ket{\psi})\ket{\bar{0}0^{\tau}}  \right \| \le \varepsilon_{1},
	\end{equation}
	where $\Pi_{\bar{0}0} = I \x (\ket{\bar{0}}\bra{\bar{0}})\x(\ket{0^{\tau}}\bra{0^{\tau}})$.	The detailed construction of $U_{qff}(W,\tau) = V_{q}^{\dagger} W_{ctrl} V_{q}$ can be described as: $V_q \ket{\psi,\bar{0}}\ket{0^{\tau}}= \sum_{l=0}^{2^{\tau}-1} \sqrt{\frac{p_l}{1-(\sum_{x = {2^{\tau}}}^{t}p_x)}} \ket{\psi,\bar{0}}\ket{l}, W_{ctrl} = \sum_{l=0}^{2^{\tau}-1} W^l \x \ket{l}\bra{l}$, where $p_l$ is the probability of uniform random walk with a distance $l$ from the origin point at time $t$, the detailed description of $p_l$ can be found in Equation (6) of \cite{qff}. 
	
	Then the new algorithm for searching is described as follows.
	\begin{theorem}
[Search marked vertex based on quantum fast-forwarding]\label{thm:2}For an ergodic reversible Markov chain $P$ in $G(V,E)$ with single marked vertex $g$, \Cref{alg:qis} achieves $\ket{g}$ with success probability more than $4/5$ from initial state $\ket{\pi}$. The complexity is $\Seg + \Theta(\log(\varepsilon^{-1})\sqrt{\Heg}) (\Ueg +  \Ceg)$ with $\Theta(\log\log(\varepsilon^{-1})+\log\sqrt{\Heg})$ ancilla qubits, where $\varepsilon$ is the error.	
	\end{theorem}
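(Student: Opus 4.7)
The plan is to mirror the proof of \Cref{thm:1}, replacing each phase-estimation block $U_{qee}(W(s_i),\tau)$ by a quantum fast-forwarding block $U_{qff}(W(s_i),\tau)$ from \Cref{lem:quantum fast-forwarding} tuned to simulate $D(s_i)^{t}$ for an appropriately chosen $t$. The expected \Cref{alg:qis} prepares $\ket{\pi}\ket{\bar 0 0^\tau}$, performs the marked-vertex check as in \Step{a}, and otherwise applies $U_{qff}(W(s_i),\tau)$ sequentially for $i=1,\dots,r$ with the same optimal angles $\theta_i=\frac{i\pi}{2(r+1)}$. Fixing $r$ at a small constant so that $G(r)=\cos^{r+1}\bigl(\tfrac{\pi}{2(r+1)}\bigr)$ satisfies $G(r)^2>4/5$ will suffice.

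For the success-probability analysis I would expand $\ket{\bar\pi}$ in the eigenbasis of each successive $D(s_i)$ and apply the same triangle-inequality telescoping that produced (\ref{eq:total1}). The main product $\beta_0^r\cdots\beta_0^1\,\langle g|v_0(s_r)\rangle$ is purely spectral, and the optimization yielding $G(r)$ carries over verbatim. The residual error terms $A_i$ must now be bounded using (\ref{eq:close}): up to an $\varepsilon_1$ error, $\Pi_{\bar 0 0^\tau}U_{qff}(W(s_i),\tau)$ realises $D(s_i)^{t}$ on the $\ket{\bar 00^\tau}$ sector, so by \Cref{lem:spectrum} each $\ket{v_k(s_i)}$ is multiplied by $\cos^t(\varphi_k(s_i))$. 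Combining this with $|\beta_k^i|\le|\alpha_{k,0}^i|$ from (\ref{eq:ab}) and the hitting-time identity (\ref{eq:16}) gives
\begin{equation*}
A_i\le \sqrt{\sum_{k\ge 1}|\alpha_{k,0}^i|^2\cos^{2t}(\varphi_k(s_i))}+\varepsilon_1.
\end{equation*}
With $t=\Theta(\Heg\log(\varepsilon^{-1}))$ and $\varepsilon_1=O(\varepsilon/r)$, together with $\Heg(s_i)\le\Heg$ from \cite{Krovi2015QuantumWC}, each $A_i$ is driven to $O(\varepsilon/r)$, so summing produces $\sqrt{p_{succ}}\ge G(r)-O(\varepsilon)$, and the monotonicity argument on $\cos^{r+1}(\pi/(2(r+1)))$ used in \Cref{thm:1} gives $p_{succ}>4/5$ for the chosen constant $r$.

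The resource count then follows directly from \Cref{lem:quantum fast-forwarding}: each $U_{qff}(W(s_i),\tau)$ with simulation time $t=\Theta(\Heg\log(\varepsilon^{-1}))$ and precision $\Theta(\varepsilon/r)$ uses $\Gamma=\Theta(\sqrt{t\log(\varepsilon^{-1})})=\Theta(\log(\varepsilon^{-1})\sqrt{\Heg})$ walk-operator calls together with $\tau=\lceil\log\Gamma\rceil=\Theta(\log\log(\varepsilon^{-1})+\log\sqrt{\Heg})$ ancilla qubits, and summing over the constant $r$ rounds preserves both asymptotics, matching the stated $\Seg+\Theta(\log(\varepsilon^{-1})\sqrt{\Heg})(\Ueg+\Ceg)$ cost. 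The main technical obstacle I foresee is justifying the choice $t=\Theta(\Heg\log(\varepsilon^{-1}))$ rather than the much weaker $\Omega(\Heg/\varepsilon^{2})$ that a naive $\sup_x x\,e^{-tx}=1/(et)$ estimate combined with (\ref{eq:16}) produces: one must convert the $\ell^1$-type hitting-time budget into the geometric suppression afforded by $\cos^{2t}$, either by extracting a lower bound $\varphi_k(s_i)^2=\Omega(1/\Heg)$ for $k\ge 1$ on the interpolated absorbing chain or by a two-regime split of the spectrum. It is precisely at this step that the improved $\log(\varepsilon^{-1})$ dependence over \Cref{thm:1} is earned.
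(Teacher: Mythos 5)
Your outline follows the paper's strategy at the top level (telescoping over the $r$ rounds with the optimal angles $\theta_i=\frac{i\pi}{2(r+1)}$, carrying over $G(r)$ verbatim, and bounding the residual terms via (\ref{eq:close})), but there are two concrete divergences, the first of which is a real gap in your version. You compose bare $U_{qff}(W(s_i),\tau)$ blocks sequentially, yet \Cref{lem:quantum fast-forwarding} only guarantees an output of the form $D^{t}\ket{\psi}\ket{\bar 0 0^{\tau}}+\ket{\chi}\ket{00^{\perp}}$: the ancilla register is \emph{not} returned to $\ket{0^{\tau}}$, so the next round's fast-forwarding block is fed garbage and the lemma no longer applies to it. This is exactly why \Cref{alg:qis} uses the composite gadget $U_{qfs}$ from \Cref{sect:preliminaries}, which records the ``success'' of each round in a fresh flag qubit $\gamma_i$ via $ccX_{234,\gamma_i}$ and then applies $U_{qff}^{\dagger}$ to uncompute the fast-forwarding ancillas; that uncomputation is also what allows the $\tau$ ancillas to be reused across all $r$ rounds so the count stays at $\Theta(\log\log(\varepsilon^{-1})+\log\sqrt{\Heg})$. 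Your proof needs this gadget (or an explicit intermediate-measurement argument) before the sequential composition and the $A_i$ bounds are legitimate.

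On the choice of $t$: you flag, correctly, that $t=\Theta(\Heg\log(\varepsilon^{-1}))$ does not follow from (\ref{eq:16}) alone and would need either $\varphi_k(s_i)^2=\Omega(1/\Heg)$ or a two-regime spectral split. The paper takes a different route here: it sets $t=\frac{2}{\Delta(s)}\log(\frac{2r}{\varepsilon})$ and bounds every $\cos^{t}\varphi_k(s_i)\le(1-\Delta(s))^{t}\le\frac{\varepsilon}{2r}$ uniformly through the spectral gap, so its error analysis closes without touching the hitting time at all. The price is that the claimed query count $\Gamma_2=\Theta(\log(\varepsilon^{-1})\sqrt{\Heg})$ then rests on an unstated relation $1/\Delta(s)=O(\Heg)$, which the paper does not establish (indeed (\ref{eq:16}) yields the opposite inequality $\Heg(s_i)\le 1/\Delta(s_i)$). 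So the obstacle you identify is genuinely present in both arguments, just surfaced in different places; your incoherent bound $A_i\le\bigl(\sum_{k\ge1}|\alpha_{k,0}^i|^2\cos^{2t}(\varphi_k(s_i))\bigr)^{1/2}+\varepsilon_1$ is in fact the cleaner starting point (the paper's intermediate expression $|\sum_{k}\cos^{t}(\varphi_k)\alpha_{k,0}^i|$ treats an $\ell^2$ quantity as a scalar sum), but neither version as written completes the reduction from spectral decay to $\Heg$.
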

	\begin{proof}
	\begin{algorithm}[H]
		\footnotesize
		\caption{The generalized interpolated search algorithms based on quantum fast-forwarding}
		\label{alg:qis}
		\textbf{Input}: $\Gamma_2$, a number satisfies that $\Gamma_2 = \Theta(\log(\varepsilon^{-1})\sqrt{\Heg})$, where $\varepsilon$ is the error; $S = \{s_1, s_2,. . . ,s_r\}$.\\
		\textbf{Output}: $\ket{\psi_r}$, the state that has a constant overlap with the $\ket{g}$.
		\begin{algorithmic}[1]
			\STATE Set $\tau = \lceil  \log \Gamma_2 \rceil$, prepare the initial state as $\ket{\pi}\ket{\bar{0}}\ket{0^{\tau}}\ket{0}\ket{0^r}$ on $\Reg_1\Reg_2\Reg_3\Reg_4\Reg_5$, where the five registers are the same as registers in \cite{our}.
			\STATE Measure  $\Reg_1$ and check whether it is $\ket{g}$.
			\IF {the current vertex is marked}
			\RETURN the marked vertex, stop.
			\ELSE
			\FOR {i = 1,2,...,r}
			\STATE Apply $U_{qfs}(W(s_i), \tau)$ with $s_i$  to the current state.
			\ENDFOR
			\ENDIF
			\STATE Measure and output the first register.
		\end{algorithmic}
	\end{algorithm}
	We analyze the success probability firstly. For initial state $\ket{\bar{\pi}}$ and target state $\ket{g}$, let $p_{succ}$ denote the success probability, then we have
	\begin{align}
		\sqrt{p_{succ}}		
		&= \| \ket{g}\bra{g}_{1}U_{qfs}(s_r)\cdots U_{qfs}(s_1)(\ket{\bar{\pi}}\ket{\bar{0} 0^{\tau} 0 0^r}) \|\tag*{} \\
		&= \|\ket{g}\bra{g}_{1}U_{qfs}(s_r)\cdots U_{qfs}(s_1)(\sum_{k=0}^{n-1} \beta_k^1 \ket{v_k(s_1)}) \ket{\bar{0}0^{\tau}00^r}\|\tag*{} \\
		&\ge \|\ket{g}\bra{g}_{1} \ket{1}\bra{1}_{\gamma_1 \cdots \gamma_r}U_{qfs}(s_r) \cdots  U_{qfs}(s_1)\beta_0^1 \ket{v_0(s_1)} \ket{\bar{0}0^{\tau}0} \ket{0^r}\|  \tag*{} \\
		&- \|\ket{g}\bra{g}_{1} \ket{1}\bra{1}_{\gamma_1  \cdots \gamma_r} U_{qfs} (s_r) \cdots  U_{qfs}(s_1)( \sum_{k=1}^{n-1} \beta_k^1 \ket{v_k(s_1)})\ket{\omega_0}\| \tag*{where $\Pi_g = \ket{g}\bra{g}_{1}$, $\ket{\omega_i} = \ket{\bar{0}0^{\tau}0}\ket{1^i 0^{r-i}}$ } \\
		&\ge \|\ket{g}\bra{g}_{1} \ket{1}\bra{1}_{\gamma_1 \cdots\gamma_r}U_{qfs}(s_r)\cdots \beta_0^1 \ket{v_0(s_1)}  \ket{\omega_1}\| - B_1 \tag*{} \\
		&\ge \|\ket{g}\bra{g}_{1} \ket{1} \bra{1}_{\gamma_1  \cdots \gamma_r} U_{qfs}(s_r) \beta_0^{r-1} \cdots \beta_0^1 \ket{v_0(s_r)} \ket{\omega_{r-1}} \| - B_1 - \cdots -B_r \tag*{} \\
		&= |\left \langle  g |  v_0(s_r) \right\rangle\beta_0^{r}\beta_0^{r-1}\cdots\beta_0^1| - B_1 -\cdots -B_r,\tag*{}
	\end{align}
	where	$B_i=	 \|\ket{g}\bra{g}_{1} \ket{1} \bra{1}_{\gamma_1  \cdots \gamma_r} U_{qff} (s_r) \cdots  U_{qff} (s_{i}) ( \sum_{k=0}^{n-1} \beta_k^i\beta_0^{i-1} \cdots \beta_0^1 \ket{v_k(s_i)})  \ket{\omega_{i-1}} \|$. Since the $i$th qubit in the fifth register changes from $\ket{0}$ to $\ket{1}$ only if the state before $ccX_{234,\gamma_i}$ is expressed as $\ket{\cdots}\ket{\bar{0}0^{\tau}0}\ket{1^{i-1}0^{r-i+1}}$, we have
	\begin{align} B_i
		=&	 \|\ket{g}\bra{g}_{1} \ket{1}\bra{1}_{\gamma_1 \cdots\gamma_r}U_{qfs}(s_r)\cdots U_{qfs}(s_{i})(\sum_{k=0}^{n-1} \beta_k^i\beta_0^{i-1} \cdots \beta_0^1 \ket{v_k(s_i)})\ket{\omega_{i-1}}\|\tag*{}\\
		\le&	 \|(\ket{1}\bra{1}_{\gamma_i})U_{qfs}(s_r)\cdots U_{qfs}(s_{i})(\sum_{k=0}^{n-1} \beta_k^i\beta_0^{i-1} \cdots \beta_0^1 \ket{v_k(s_i)}) \ket{\omega_{i-1}}\|\tag*{}\\
		=&	 \|U_{qfs}(s_r)\cdots (\ket{1}\bra{1}_{\gamma_i})U_{qfs}(s_{i})(\sum_{k=0}^{n-1} \beta_k^i\beta_0^{i-1} \cdots \beta_0^1 \ket{v_k(s_i)}) \ket{\omega_{i-1}}\|\tag*{}\\
		=&	 \|\ket{1}\bra{1}_{\gamma_i}U_{qfs}(s_{i})(\sum_{k=0}^{n-1} \beta_k^i\beta_0^{i-1} \cdots \beta_0^1 \ket{v_k(s_i)}) \ket{\omega_{i-1}}\|\tag*{}\\
		=&	 \|\ket{1} \bra{1}_{\gamma_i } ccX_{2,4} X_4 (U_{qff}^{ \dagger}  (W(s_i), \tau)\x I)ccX_{234,\gamma_i} (U_{qff} (W(s_i), \tau)\x I)( \sum_{k=0}^{n-1} \beta_k^i \beta_0^{i-1} \cdots \beta_0^1 \ket{v_k(s_i)}) \ket{\omega_{i-1}}\|\tag*{ }\\
		=&	 \|ccX_{2,4} X_4 (U_{qff}^{ \dagger}  (W(s_i), \tau)\x I) \ket{1} \bra{1}_{\gamma_i}  ccX_{234,\gamma_i} (U_{qff} (W(s_i), \tau)\x I) ( \sum_{k=0}^{n-1} \beta_k^i \beta_0^{i-1} \cdots \beta_0^1 \ket{v_k(s_i)}) \ket{\omega_{i-1}}\|\tag*{}\\
		=&	 \|\ket{1}\bra{1}_{\gamma_i} ccX_{234,\gamma_i}  (U_{qff}(W(s_i), \tau)\x I)(\sum_{k=0}^{n-1} \beta_k^i\beta_0^{i-1} \cdots \beta_0^1 \ket{v_k(s_i)}) \ket{\omega_{i-1}}\|\tag*{}\\
		=&	 \|\ket{\bar{0}}\bra{\bar{0}}_{2}\ket{0^\tau}\bra{0^\tau}_{3}\ket{0^\tau}\bra{0}_{4}(U_{qff}(W(s_i), \tau)\x I)(\sum_{k=0}^{n-1} \beta_k^i\beta_0^{i-1} \cdots \beta_0^1 \ket{v_k(s_i)}) \ket{\omega_{i-1}}\|\tag*{}\\
		\le&\| \Pi_{\bar{0}0} (U_{qff}(W(s_i), \tau)\x I) (\sum_{k=1}^{n-1} \beta_k^i \beta_0^{i-1} \cdots \beta_0^1  \ket{v_k(s_i)}) \ket{\omega_{i-1}}\|.\tag*{}
	\end{align}
	From (\ref{eq:9}) we have the above equation satisfies 
	$$B_i  =  |\beta_0^{i-1} \cdots \beta_0^1 |\cdot| \cos \theta_{i-1} - \sin \theta_{i-1} \frac{\cos\theta_i}{\sin\theta_i} |\cdot\| \Pi_{\bar{0}0}(U_{qff} (W(s_i), \tau)\x I)( \sum_{k=1}^{n-1}  \alpha_{k,0}^i \ket{v_k(s_i)} ) \ket{\omega_{i-1}} \|,$$
	then from triangle equality, we have
	\begin{align}
		&B_i \le |\beta_0^{i-1} \cdots \beta_0^1|\cdot| \cos \theta_{i-1} - \sin\theta_{i-1}\frac{\cos\theta_{i}}{\sin\theta_{i}}|\cdot(\| \sum_{k=1}^{n-1} (D^{t}\alpha_{k,0}^i\ket{v_k(s)})\ket{\omega_{i-1}}\|+\tag*{} \\
		&  \| \Pi_{\bar{0}0} (U_{qff}(W(s_i), \tau)\x I)\ket{\omega_{i-1}}\sum_{k=1}^{n-1} \alpha_{k,0}^i\ket{v_k(s_i)}\ket{\omega_{i-1}}-\sum_{k=1}^{n-1}(D^{t}\alpha_{k,0}^i\ket{v_k(s_i)})\ket{\omega_{i-1}}\|). \tag*{} 
	\end{align}
	From \Cref{lem:spectrum} we have
	$$\| D^{t}\ket{v_k(s)}\|=\cos^{t}\varphi_k \le \cos^{t}\varphi_1= (1 - \Delta(s))^t =  e^{-t \log(\frac{1}{1-\Delta(s)})}.$$
	Let $t = \log(\frac{2r}{\varepsilon})\frac{2}{\Delta(s)}$, where $\Delta(s)$ denotes the eigenvalue gap of $P(s)$, then
	$$\| D^{t}\ket{v_k(s)}\|= \left[e^{- \log(\frac{2r}{\varepsilon})} \right]^{\frac{2}{\Delta(s)}\log(\frac{1}{1-\Delta(s)})} = (\frac{\varepsilon}{2r})^{\frac{2}{\Delta(s)}\log(\frac{1}{1-\Delta(s)})} \le \frac{\varepsilon}{2r}.$$
	Combine with (\ref{eq:close}) and $\varepsilon_{1} = \frac{\varepsilon}{2r}$, the above equation can be expressed as 
	\begin{align*}
		&|\beta_0^{i-1} \cdots \beta_0^1|\cdot| \cos\theta_{i-1}  - \sin\theta_{i-1}\frac{\cos\theta_{i}}{\sin\theta_{i}}|\cdot(2 |\sum_{k=1}^{n-1} \cos^{t}(\varphi_k)\alpha_{k,0}^i|  + \frac{\varepsilon}{2r})\tag*{}\\
		\le&|\beta_0^{i-1} \cdots \beta_0^1|\cdot|\cos\theta_{i-1}  - \sin\theta_{i-1}\frac{\cos\theta_{i}}{\sin\theta_{i}}|\cdot(\frac{\varepsilon}{2r}| \sum_{k=1}^{n-1}\alpha_{k,0}^i|  + \frac{\varepsilon}{2r})	\tag*{}\\
		\le&|\beta_0^{i-1} \cdots \beta_0^1|\cdot|\cos\theta_{i-1} - \sin\theta_{i-1}\frac{\cos\theta_{i}}{\sin\theta_{i}}|\cdot\frac{\varepsilon}{r}\le\varepsilon.
	\end{align*}
	From (\ref{eq:theta}) we have
	\begin{align*}
		\sqrt{p_{succ}}&= |\left \langle  g |  v_0(s_r) \right\rangle\beta_0^{r}\beta_0^{r-1}\cdots\beta_0^1| - B_1 - B_2-\cdots -B_r\\
		&\ge \cos^{k+1}(\frac{\pi}{2(k+1)})-\varepsilon. 
	\end{align*}	
	From the definition of $f(r)$ in proof of \Cref{thm:1}, success probability increases with the number of steps $r$ increases, so the algorithm avoids the souffl\'e problems that amplitude amplification will bring.
	Compared with the result in \Cref{thm:1}, we achieve a faster search algorithm with less ancilla qubits.
	\end{proof}
	In general, for a reversible Markov chain, if we reverse a search algorithm, we can obtain a qsamping algorithm.
	Reversing the above algorithm results, we get a qsampling algorithm as shown below.
	\begin{theorem}
	[Qsampling based on generalized interpolated walks] \label{thm:3}
		For an ergodic reversible Markov chain P in $G(V,E)$, there exists an algorithm that outputs $\ket{\pi}$ with success probability more than $\frac{4}{5}$ from initial state $\ket{g}$ for any $g \in V$. The complexity is $\Theta(\Seg(g)+\log(\varepsilon^{-1})\sqrt{\Heg} (\Ueg +  \Ceg))$, with $\Theta(\log\log(\varepsilon^{-1})+\log\sqrt{\Heg})$ ancilla qubits, where $\varepsilon$ is the error.
	\end{theorem}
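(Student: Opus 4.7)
The plan is to invoke the standard reversal duality between searching and qsampling applied to the algorithm of Theorem 2. Explicitly, the proposed qsampling algorithm prepares $\ket{g}\ket{\bar 0}\ket{0^\tau}\ket{0}\ket{0^r}$ on $\Reg_1\Reg_2\Reg_3\Reg_4\Reg_5$ using $Setup(g)$ at cost $\Seg(g)$ (a computational-basis preparation), then applies the adjoints $U_{qfs}^\dagger(W(s_1),\tau), \ldots, U_{qfs}^\dagger(W(s_r),\tau)$ in this order --- equivalently, runs the circuit of Algorithm 2 backwards with each gate replaced by its adjoint --- using the same parameter set $\{s_i\}$ with $\theta_i=i\pi/(2(r+1))$ and the same $\tau=\lceil\log\Gamma_2\rceil$, and finally measures $\Reg_1$. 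The initial $Check(M)$ step of Algorithm 2 is omitted since $\ket{g}$ is already the input.

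Correctness would follow from unitarity. Writing $U = U_{qfs}(s_r)\cdots U_{qfs}(s_1)$ for the unitary implemented by Algorithm 2, the proof of Theorem 2 established
\[
\bigl|\bra{g}\otimes\bra{\mathrm{good}}\,U\,\ket{\bar\pi}\otimes\ket{\mathrm{init}}\bigr| \;\geq\; \cos^{r+1}\!\bigl(\tfrac{\pi}{2(r+1)}\bigr) - \varepsilon,
\]
where $\ket{\mathrm{init}}=\ket{\bar 0 0^\tau 0 0^r}$ and $\ket{\mathrm{good}}$ denotes the ancilla pattern $\ket{1^r}$ on $\Reg_5$ extracted by the projector $\ket{1}\bra{1}_{\gamma_1\cdots\gamma_r}$. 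Taking the complex conjugate of this matrix element yields
\[
\bigl|\bra{\bar\pi}\otimes\bra{\mathrm{init}}\,U^\dagger\,\ket{g}\otimes\ket{\mathrm{good}}\bigr| \;\geq\; \cos^{r+1}\!\bigl(\tfrac{\pi}{2(r+1)}\bigr) - \varepsilon,
\]
and squaring gives success probability at least $4/5$ once $r \geq 10$ and $\varepsilon$ is sufficiently small. The monotonicity of $\cos^{r+1}(\pi/(2(r+1)))$ in $r$, established via the function $f(r)=\cos^r(\pi/(2r))$ in the proof of Theorem 1, again precludes the souffl\'e problem in the reversed direction.

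For resources, the reversed circuit invokes exactly the same sequence of controlled-$W(s_i)$ calls as Algorithm 2 (each adjointed, which is free up to a constant factor), so the walk-operator cost remains $\Theta(\log(\varepsilon^{-1})\sqrt{\Heg})$ and the ancilla count stays at $\Theta(\log\log(\varepsilon^{-1})+\log\sqrt{\Heg})$; adding the input preparation gives the claimed bound $\Theta(\Seg(g)+\log(\varepsilon^{-1})\sqrt{\Heg}(\Ueg+\Ceg))$.

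The main obstacle is turning the reversal identity into a genuine algorithm: the $U^\dagger$ expression requires the ancilla input $\ket{\mathrm{good}}=\ket{1^r}$ on $\Reg_5$, whereas an algorithm should start from the computational-basis state $\ket{0^r}$. This is handled by prepending an $X^{\otimes r}$ on $\Reg_5$ (cost $O(r)=O(1)$), or alternatively by exploiting that the $ccX$ gates appearing inside $U_{qfs}$ are involutions and that $\Reg_5$ is touched by a single $ccX_{234,\gamma_i}$ per layer, so the flag cascade $\ket{0^r}\to\ket{1^r}$ of the forward circuit becomes $\ket{0^r}\to\ket{0^r}$ on the success branch of the reversed circuit once the initial $X$'s are absorbed. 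Verifying this bookkeeping --- together with the fact that the $\Pi_{\bar 0 0}$-based error bounds in Theorem 2 transfer verbatim under the adjoint, since Lemma 3 controls a norm distance that is invariant under adjoint --- completes the proof.
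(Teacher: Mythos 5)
Your proposal takes essentially the same route as the paper: the paper offers no proof of Theorem~3 beyond the single sentence that reversing the search algorithm of Theorem~2 yields a qsampling algorithm, and your argument is precisely that reversal duality, spelled out with the adjoint matrix-element identity, the ancilla-flag bookkeeping, and the resource count. Your version is in fact more detailed than what the paper provides, and the details check out.
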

	\subsection{Application}\label{sect:Application}
	%	\hspace{1.2em}
	In this section, we give an application of generalized interpolated walks. Adiabatic quantum computing is an important quantum computing model that is equivalent to standard quantum circuit model \cite{Aharonov2003AdiabaticQS}. We apply generalized interpolated walks to prepare quantum stationary state of Markov chains for adiabatic quantum computing. 
	In adiabatic quantum computing, the state preparation is the premise of algorithm and is always expressed as a ground state of the target Hamiltonian which can be obtained by a series of slowly changing Hamiltonian.
	
	However, the slowly evolving Hamiltonian sequence is not always easy to prepare. For those Hamiltonian that corresponds the reversible Markov chains, the corresponding ground state  can be equivalent to  the stationary state \cite{Aharonov2003AdiabaticQS}, and the construction of slowly evolving Hamiltonian sequences are reduced to construct a series of slowly evolving Markov chains with their stationary states \cite{Wocjan2008SpeedupVQ}, where the algorithm to prepare Boltzmann-Gibbs distribution has been constructed. However, the construction of general slowly evolving Markov chain sequence is not given in \cite{Wocjan2008SpeedupVQ}, although it is necessary when we want to prepare stationary state of any reversible Markov chain.
	
	By applying generalized interpolated walks, we give the detailed construction of slowly evolving Markov chains to approach any reversible Markov chain P as follows.
	
	Let $Q=(q_{i,j})_{1\le i,j\le r}$  satisfy $$q_{i,j}= \left\{\begin{matrix}
		1 \quad if \quad j = i+1,1\le i\le r-1\\
		0 \quad \quad \quad \quad \quad \quad \quad \ \quad otherwise
	\end{matrix}\right.$$
	We define a series of Markov chains as:
	$$P(s_0),P(s_1), P(s_2), \cdots, P(s_i), \cdots, P(s_r).$$
	The walk operator can be defined as
	$$P(s_i) \coloneqq (1-s_i)P + s_i P',\ W(s_i) \coloneqq V(s_i)^{\dagger} \cdot S \cdot V(s_i) \cdot R_{\bar{0}},$$
	and the corresponding details can be seen in \Cref{sect:preliminaries}.
	Let $\pi(s_i) =  \frac{(1-s_i)\pi+s_i\pi'}{1-s_i+s_i\pi_0}$.
	From proposition 11 in \cite{Krovi2015QuantumWC}, for every $s\in[0,1)$, $P(s_i)$ is a reversible Markov chain and $\pi(s_i)$ is the only stationary distribution of $P(s_i)$. Define the corresponding quantum state as
	$\ket{\pi(s_i)}=\sum_{x}\sqrt{\frac{(1-s_i)\pi_x+s_i\pi'_x}{1-s_i+s_i\pi_0}}\ket{x}$,
	we have
	\begin{align*}	W(s_i)\ket{\pi(s_i)}\ket{\bar{0}} 
		=&V(s_i)^{\dagger} \cdot S \cdot V(s_i) \cdot R_{\bar{0}}(\sum_{x,y} \sqrt{\pi(s_i)_x}\ket{x}\ket{\bar{0}})\\
		%	=&V(s_i)^{\dagger} \cdot S \cdot V(s_i) (\sum_{x,y} \sqrt{\pi(s_i)_x}\ket{x}\ket{\bar{0}})\\
		%	=&V(s_i)^{\dagger} \cdot S (\sum_{x}\sqrt{\pi(s_i)_x}\ket{x}(\sum_{y}\sqrt{P(s_i)_{x,y}}\ket{y}))\\
		%	=&V(s_i)^{\dagger} \cdot S(\sum_{x,y}\sqrt{P(s_i)_{y,x}}\sqrt{\pi(s_i)_y}\ket{x}\ket{y})\\
		=&V(s_i)^{\dagger} \cdot S(\sum_{y}\sqrt{\pi(s_i)_y}(\sum_{x}\sqrt{P(s_i)_{y,x}}\ket{x})\ket{y})\\
		=&V(s_i)^{\dagger}(\sum_{y}\sqrt{\pi(s_i)_y}\ket{y}(\sum_{x}\sqrt{P(s_i)_{y,x}}\ket{x}))\\
		=&V(s_i)^{\dagger}V(\sum_{y}\sqrt{\pi(s_i)_y}\ket{y}\ket{\bar{0}})%\\=&\sum_{y}\sqrt{\pi(s_i)_y}\ket{y}\ket{\bar{0}}\\
		=\ket{\pi(s_i)}\ket{\bar{0}},
	\end{align*}
	that is for any $s\in[0,1)$, $\ket{\pi(s_i)}\ket{\bar{0}}$ is 1-eigenvector of $W(s_i)$.
	
	According to the construction of $\ket{\pi(s_i)}$ we know:
	\begin{equation}
		\ket{\pi(s_i)} = \sqrt{\frac{(1-s_i)(1-\pi_0)}{1-s(1-\pi_0)}} \ket{\bar{\pi}} + \sqrt{\frac{\pi_0}{1-s(1-\pi_0)}} \ket{0}.
	\end{equation}
	For any $i \in \{0, 1, \cdots, r\}$, let
	$$s_i=\frac{1-\pi_0/\sin^2\theta_i}{1-\pi_0}=\frac{1-\pi_0/\sin^2(\frac{i\pi}{2(r+1)})}{1-\pi_0}, i \in \{0, 1, \cdots, r\},$$
	and we have
	$\ket{v_0(s_i)}=
	\cos\theta_i\ket{\bar{\pi}}+\sin\theta_i\ket{0}=\cos(\frac{i\pi}{2(r+1)})\ket{\bar{\pi}}+\sin(\frac{i\pi}{2(r+1)})\ket{0}$.

	Now we prove that all above Markov chains satisfies varying slowing condition, which means that for any two adjacent Markov chains, there exists $q$ such that the inner product of their quantum stationary state is greater than $q$.
	
	Select the initial quantum stationary state to be $\ket{0}\in\spn\{\ket{0},\ket{1},\cdots,\ket{n-1}\}$, which is easy to prepare. Since we have $\left \langle \bar{\pi}| 0 \right \rangle =0$, which means \begin{align*}
		\left \langle \pi_i | \pi_{i+1} \right \rangle 
		=&(\cos\theta_i\bra{\bar{\pi}}+\sin\theta_i\bra{0})(\cos\theta_{i+1}\ket{\bar{\pi}}+\sin\theta_{i+1}\ket{0})\\
		=&\cos\theta_i\cos\theta_{i+1}+\sin\theta_i\sin\theta_{i+1}
		%=\cos(\theta_{i+1}-\theta_i)
		=\cos(\frac{\pi}{2(r+1)}).
	\end{align*}
	If $r\ge\frac{\pi}{2\arccos q}-1$, then  $\cos(\frac{\pi}{2(r+1)})\ge q$, and the slowing evolving condition $\left \langle \pi_i | \pi_{i+1} \right \rangle \ge q$ is satisfied for $i \in \{0, 1, \cdots, r\}$, where the quantum stationary state of the first Markov chain is easy to prepare and the quantum stationary state of the last Markov chain approximates target state.
	
	\section{Discussion} \label{sect:Discussion}
	\hspace{1.2em}
	In this work, by defining and applying generalized interpolated walks instead of amplitude amplification, we both improve the success probability of search algorithm to nearly 1, and avoid the souffl\'e problems that amplitude amplification will bring. By introducing quantum fast-forwarding our new algorithms not only reduce the dependency on error but also reduce the number of ancilla qubits required. In this process, we find the relationship between the phase estimation  and quantum fast-forwarding, as well as the relationship between the generalized interpolated walks and amplitude amplification. We think there is something that deserves to be explored in the future.
	
	Besides, we use generalized interpolated walks to improve the success probability of qsampling algorithm as well.
	Finally, by applying generalized interpolated walks to adiabatic quantum computing, we construct a series of slowing evolving Markov chains, which satisfy that the first stationary state is easy to prepare and the final stationary state is the target state. The process is fundamental in adiabatic stationary state preparation and we give the detailed construction.
	
	\section*{Acknowledgements}
	We thank the support of National Natural Science Foundation of China (Grant No.61872352), and Program for Creative Research Group of National Natural Science Foundation of China (Grant No. 61621003).

\end{document}